\documentclass[11pt,a4paper]{article}
\usepackage[T1]{fontenc}
\usepackage[utf8]{inputenc}
\usepackage{lmodern}
\usepackage[margin=28mm]{geometry}
\usepackage{amsmath,amssymb,amsthm,mathtools}
\usepackage{microtype,booktabs,array}
\usepackage[shortlabels]{enumitem}
\usepackage{tikz}
\usetikzlibrary{arrows.meta,positioning,calc}
\usepackage[colorlinks=true,linkcolor=blue!50!black,citecolor=blue!50!black,urlcolor=blue!50!black]{hyperref}
\usepackage[nameinlink,capitalize,noabbrev]{cleveref}
\usepackage{aliascnt}
\newtheorem{theorem}{Theorem}[section]
\newcommand{\sharedtheorem}[3]{%
  \newaliascnt{#1}{theorem}%
  \newtheorem{#1}[#1]{#2}%
  \aliascntresetthe{#1}%
  \crefname{#1}{#2}{#3}\Crefname{#1}{#2}{#3}}
\sharedtheorem{lemma}{Lemma}{Lemmas}
\sharedtheorem{fact}{Fact}{Facts}
\sharedtheorem{proposition}{Proposition}{Propositions}
\sharedtheorem{corollary}{Corollary}{Corollaries}
\theoremstyle{definition}
\sharedtheorem{definition}{Definition}{Definitions}
\sharedtheorem{example}{Example}{Examples}
\sharedtheorem{remark}{Remark}{Remarks}
\newtheorem*{restatedstructure}{Theorem~\ref*{thm:structure}}
\newtheorem*{restatedpartners}{Lemma~\ref*{lem:partners}}
\newtheorem*{restatedresidualdegree}{Lemma~\ref*{lem:residual-degree}}

\newcommand{\eps}{\varepsilon}
\newcommand{\Cout}{C_{\mathrm{out}}}
\newcommand{\LOCAL}{\mathsf{LOCAL}}

\newcommand{\hard}{\widehat D_6}

\hypersetup{
  pdftitle={Near-Optimal Distributed Domination in Planar Graphs},
  pdfauthor={Wojciech Wawrzyniak},
  pdfkeywords={Distributed algorithms, minimum dominating set, planar graphs, LOCAL model}}
\title{Near-Optimal Distributed Domination\\in Planar Graphs}
\author{Wojciech Wawrzyniak\\[2pt]
  \normalsize Faculty of Mathematics and Computer Science\\
  \normalsize Adam Mickiewicz University, Pozna\'n, Poland\\
  \normalsize\texttt{wwawrzy@amu.edu.pl}}
\date{}
\begin{document}
\maketitle
\begin{abstract}
We give a deterministic $(8+\eps)$-approximation for minimum dominating
set on planar graphs in a constant number of rounds of the $\LOCAL$
model, for every $\eps>0$. This improves the previous ratio $11+\eps$
obtained by Heydt et al. The ratio is near-optimal in this model:
its leading constant is only one above the known lower bound of $7$.
Our result closes three quarters of the previous gap, reducing it from $4$ to $1$.
Our main contribution is a sharp structural bound.
For any dominating set $D$, assigning each vertex outside $D$ to a
neighboring center gives disjoint owner blocks.
If $k_x$ counts the other blocks containing a neighbor of $x$, then
$\sum_{x\notin D}(k_x-2)^+\le(4|D|-12)^+$, where $z^+=\max\{z,0\}$.
The bound holds for every such assignment, and equality holds for
arbitrarily large minimum dominating sets.
We use this bound in their three-phase framework, with
new parameters and the same final linear-programming procedure.
The algorithm requires neither a planar embedding nor the graph size,
and its round bound depends only on $\eps$.
The transfer theorem of Bonamy et al. also gives a deterministic
$(25+\eps)$-approximation on graphs of bounded Euler genus, with a
round bound depending only on $\eps$ and the genus.
\end{abstract}
\section{Introduction}
\label{sec:introduction}

A dominating set of a graph $G=(V,E)$ is a set $D\subseteq V$ such that
every vertex is in $D$ or has a neighbor in $D$. The minimum dominating
set problem asks for such a set of smallest size. An $\alpha$-approximation
returns a dominating set of size at most $\alpha$ times this minimum.

In a distributed network, a dominating set gives a way to place
local coordinators. We study this problem in the $\LOCAL$ model, where
vertices with unique identifiers communicate in synchronous rounds,
with no limit on message size or local computation
\cite{Linial1992,NaorStockmeyer1995}. In a fixed number
of rounds, a vertex can learn only about the part of the graph within
a fixed distance from it. For a survey of local algorithms, see
Suomela~\cite{Suomela2013Survey}. On planar graphs, there are
constant-factor approximations in a constant number of communication
rounds. However, the best possible approximation factor is still not known.

Lower bounds show what a constant-round algorithm cannot achieve.
In 2008, Czygrinow et al.
\cite{CzygrinowHanckowiakWawrzyniak2008} used Ramsey's theorem to prove
that, for every fixed
$\eps>0$, no deterministic constant-round algorithm can find a
$(5-\eps)$-approximation of minimum dominating set on all planar graphs.
In 2014, Hilke et al.~\cite{HilkeLenzenSuomela2014} adapted this approach
to triangular grids and improved the lower bound to $7-\eps$.

On the algorithmic side, Lenzen et al. gave a
constant-round algorithm with approximation factor 130
\cite{LenzenPignoletWattenhofer2013}. Wawrzyniak showed that the same
algorithm has approximation factor 52 \cite{Wawrzyniak2014}.
Heydt, Siebertz and Vigny gave a factor-20 algorithm in 2022
\cite{HeydtSiebertzVigny2022}. More recently, their 2025 journal paper
with Kublenz and Ossona de Mendez presented a factor-$(11+\eps)$
three-phase algorithm whose final phase uses linear programming
\cite{HeydtEtAl2025}.
Here we use this framework with different parameters and a new
structural bound to obtain factor $8+\eps$.

\begin{theorem}
\label{thm:main}
For every $\eps>0$, there is a deterministic algorithm that computes
an $(8+\eps)$-approximation of minimum dominating set on every
simple planar graph in a constant number of rounds of the $\LOCAL$ model.
\end{theorem}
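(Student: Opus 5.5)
The plan follows the three-phase framework of Heydt et al.~\cite{HeydtEtAl2025} and changes only the accounting. The decisive new input is the structural bound from the abstract: for any dominating set $D$ and any owner assignment of $V\setminus D$ to neighboring centers, $\sum_{x\notin D}(k_x-2)^+\le(4|D|-12)^+$.

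Fix an optimal dominating set $D^*$ with $|D^*|=\gamma$. We then analyse the three phases against $D^*$, each phase running in $O_\eps(1)$ rounds.

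\textbf{Phase 1.} We take every vertex $v$ whose open neighbourhood cannot be dominated by few other vertices. Concretely, $v$ is taken if $N(v)$ admits no dominating set of size at most a threshold $t$ within distance two, where $t$ is a new parameter. A planar argument, via bounded density of depth-$2$ minors, shows that all but $O(1)$ times $\gamma$ of these vertices lie in $D^*$. Choosing $t$ appropriately, we aim to pay at most $1\cdot\gamma$ beyond the optimal vertices captured.

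\textbf{Phase 2.} Among vertices still undominated, we add vertices whose remaining residual degree exceeds a second threshold. The residual-degree lemma charges each such vertex to $D^*$.

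\textbf{Phase 3.} On the residual instance we solve the local LP of \cite{HeydtEtAl2025} on bounded-radius clusters and round it. Its cost is bounded by the fractional optimum times a factor governed by how many distinct owner blocks a residual vertex touches.

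The heart of the argument is charging. Each residual undominated vertex $x$ is assigned to an owner $d\in D^*$. The framework's rounding cost per $x$ is roughly $1+k_x$, where $k_x$ counts foreign blocks adjacent to $x$. Splitting this as $\min(k_x,2)+(k_x-2)^+$, the first part contributes at most $3$ per center after the earlier phases have removed high-degree structure. The second part sums to at most $4\gamma$ by the structural bound.

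Adding these contributions, $3\gamma+4\gamma$ from rounding plus $\gamma$ from the earlier phases gives a total of $(8+\eps)\gamma$. The parameters are chosen so that the Phase~1 and Phase~2 overheads absorb into the $\eps$ term, using a clustering of radius $O(1/\eps)$ whose boundary losses are an $\eps$-fraction.

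I expect the main obstacle to be the structural bound itself, namely the discharging or Euler-formula argument on the contracted planar graph. My first attempt would contract each owner block, together with its center, to a single vertex. A vertex $x$ with $k_x$ foreign neighbours then becomes a face-like gadget of a planar multigraph on $|D|$ vertices. Euler's bound $3|D|-6$ edges, combined with face counting $2|D|-4$, would give the bound $4|D|-12$ after subtracting $2$ per vertex, since each vertex contributes $k_x-2$ excess faces.

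The remaining algorithmic task is verification. Each phase must depend only on an $O_\eps(1)$-neighbourhood, and the algorithm must need neither an embedding nor $n$. Both properties are inherited from \cite{HeydtEtAl2025}.
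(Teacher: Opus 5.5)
Your accounting applies the structural bound in the wrong phase, and the step where you apply it does not go through. The LP phase of Heydt et al.\ is used as a black box. Once every vertex has at most $114$ neighbours in the residual set $R$, it returns $D_3$ with $|D_3|\le(7+\eps)|D_R|$, where $D_R$ is a smallest set dominating $R$. Your claim that its rounding cost is roughly $1+k_x$ per residual vertex is unsupported. Even granting it, the sum $\sum_{x\in R}\min(k_x,2)$ is controlled only by $|R|$, which the residual-degree bound limits only to $115\gamma$. Nothing in your argument gives ``at most $3$ per center''.

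Phases 1 and 2 are also not paid for, and your final tally is inconsistent: it adds $\gamma$ for them but also says their overhead is absorbed into $\eps$. With an unspecified threshold $t$ and a generic minor-density argument, you get only an $O(\gamma)$ overhead. Your Phase 2 rule, which adds non-optimal vertices of high residual degree, does not lower the residual optimum. The natural charging (up to six such vertices per optimal vertex) therefore leaves an extra term of order $6\gamma$ next to the $7\gamma$ of the last phase.

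The missing idea is a different division of costs. The structural bound pays for the non-optimal vertices of the first two phases that are \emph{hard}, meaning their neighbourhood cannot be dominated by six optimal vertices. The last phase pays for the rest of Phase 2.

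\textbf{Phase 1.} Take the threshold to be exactly six. Let $x\notin D$ be hard, so $N(x)$ is not dominated by six vertices of $D\setminus\{x\}$. Then $M_x=\{m(y):y\in N(x)\}$ satisfies $N(x)\subseteq N[M_x]$, so $|M_x|\ge7$, $k_x=|M_x|-1\ge6$, and $(k_x-2)^+\ge4$. Thus $W\le(4|D|-12)^+$ allows at most $(|D|-3)^+$ such vertices. This is why the threshold six and the coefficient four must match.

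\textbf{Phase 2.} It must select \emph{both} vertices of every pair with at least $19$ common neighbours in $R_1$. A $K_{3,3}$ argument puts all partners of $v$ inside any $6$-cover of $N(v)$. Hence each vertex of $D_2$ has at most six partners, and a non-hard vertex outside $D$ has all its partners in $D$. So a non-optimal vertex of $D_2$ is either hard or a partner of a vertex in $D_2\cap D$, which gives $|D_1\cup D_2|\le(|D|-3)^++7|S\cap D|$.

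\textbf{Phase 3.} The same partner rule bounds residual degrees by $6\cdot19=114$. Since $D\setminus S$ dominates $R$, the LP phase gives $|D_3|\le(7+\eps)(|D|-|S\cap D|)$. The $7|S\cap D|$ terms cancel, and the total is at most $(8+\eps)|D|$.

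Separately, your contraction sketch for the structural bound fails as stated, because one contracted edge can carry the contacts of many vertices. The paper instead triangulates and counts tricolored face corners before contracting. It saves two corners per block against $3F_3\le6|D|-12$, and it handles cut vertices of the contact graph by a splitting argument.
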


Our approximation ratio is near-optimal for deterministic constant-round
algorithms. The lower bound of Hilke et al. rules out every factor below seven.
Our result reduces the gap between the leading approximation constant
and this lower bound from $4$ to $1$, closing three quarters of that gap.
Whether factor $7+\eps$ is
achievable on all planar graphs remains open.

We group the vertices into disjoint stars centered at a dominating set $D$.
Our main contribution is a structural bound on contacts from individual
vertices to other stars. It holds for every dominating set and every
assignment to neighboring centers, and its leading coefficient is sharp.
The stars are used only in the proof. The algorithm does not need to
find them.

\paragraph*{Comparison with Heydt et al.}
We use the three phases of Heydt et al.~\cite{HeydtEtAl2025},
including the same linear-programming procedure in the final phase.
One main change is in the first phase. Both the factor-$20$ and
factor-$(11+\eps)$ algorithms of Heydt et al.
\cite{HeydtSiebertzVigny2022,HeydtEtAl2025} select every vertex whose
open neighborhood cannot be dominated by at most $3$ other vertices.
We use threshold $6$, as in \cite{LenzenPignoletWattenhofer2013}.
Our structural theorem gives a stronger bound on the size of the selected set.

In their proof, Heydt et al. first choose an independent subset of the vertices that
satisfy their neighborhood-domination condition. They then contract
stars and use a bound on the number of edges
\cite{HeydtEtAl2025}.

One edge after contraction can represent contacts from many original
vertices. We count these contacts before contraction, without choosing
an independent set. In a triangulation with a 2-connected contact graph,
we use corners of triangular faces that meet three blocks, leaving two
corners per block unused. We then split at cut vertices and bound the
extra weight caused by the split. This gives the bound for all planar graphs.

Recently, in 2026, Bonamy et al.~\cite{BonamyEtAl2026} proved a transfer
theorem. With our planar result, it gives a deterministic
$(25+\eps)$-approximation on graphs of bounded Euler
genus. This improves their ratio $34+\eps$, which is also independent
of the genus. For Euler genus $1\le g\le5$, the bound of Heydt et al.
is smaller than $25+\eps$. See \cref{cor:genus} for details.

\paragraph*{Related work.}
On general graphs, constant rounds do not suffice for a constant
approximation factor \cite{KuhnMoscibrodaWattenhofer2016}.
Smaller constants are known for planar subclasses. In 2025, Wawrzyniak
gave a factor-$6$ algorithm for triangle-free planar graphs
\cite{Wawrzyniak2025TriangleFree}. This also applies to bipartite planar
graphs and planar graphs of girth at least five. For outerplanar graphs,
the optimal factor is $5$ \cite{BonamyEtAl2021Outerplanar}.

Beyond planar graphs, Amiri, Schmid and Siebertz gave a constant-factor approximation in a
constant number of rounds on graphs of bounded genus
\cite{AmiriSchmidSiebertz2016}. Their journal paper gives such an
algorithm for locally embeddable graph classes and a $(1+\eps)$-approximation
in $O(\log^* n)$ rounds on graphs of bounded genus
\cite{AmiriSchmidSiebertz2019}.
Constant-round constant-factor approximations also exist for classes
excluding a fixed topological minor \cite{CzygrinowEtAl2018} and for
classes of bounded expansion \cite{KublenzSiebertzVigny2021,HeydtEtAl2025}.
For every fixed $\eps>0$, a $(1+\eps)$-approximation can be computed in
$O(\log^* n)$ rounds on planar and excluded-minor classes
\cite{CzygrinowHanckowiakWawrzyniak2008,CzygrinowEtAl2018}.

For distance-$k$ domination on graphs excluding $K_{2,t}$ as a minor,
Czygrinow, Han\'{c}kowiak and Witkowski
\cite{CzygrinowHanckowiakWitkowski2022} give a constant-factor
approximation in $O(k)$ rounds and a $(1+\eps)$-approximation in
$O(\log^* n)$ rounds, for fixed $k,t$ and $\eps>0$.

Results are also known under message-size restrictions.
CONGEST allows $O(\log n)$ bits per edge per round. CONGEST-BC also
requires the same message to all neighbors. Constant-factor approximations
in constant rounds exist for anonymous planar networks with short messages
\cite{Wawrzyniak2015Anonymous} and for
bounded-genus graphs in CONGEST-BC \cite{CzygrinowEtAl2019Broadcast}.
For bounded expansion, Amiri et al.~\cite{AmiriEtAl2018Expansion} give
a constant-factor approximation in $O(\log n)$ rounds in CONGEST-BC.
Deurer et al.~\cite{DeurerKuhnMaus2019} give deterministic CONGEST
algorithms for general graphs with factors logarithmic in the maximum
degree and nonconstant round bounds. Our LOCAL model has no message-size limit.

\paragraph*{Why the structural bound gives eight.}
Let $D$ be an optimum dominating set and let $D_1,D_2,D_3$ be the sets
selected in the three phases of the algorithm.
Our bound gives $|D_1\cup D_2|\le |D|+7|(D_1\cup D_2)\cap D|$.
The third phase gives
$|D_3|\le(7+\eps)(|D|-|(D_1\cup D_2)\cap D|)$.
Together, these bounds give
$|D_1\cup D_2\cup D_3|\le(8+\eps)|D|$.

We prove the structural bound in \cref{sec:triangulations,sec:cuts}
and analyze the algorithm in \cref{sec:algorithm,sec:residual}.

\section{Owner blocks and the structural theorem}
\label{sec:structural}

All graphs are finite, simple and undirected, except where we say that
we use a multigraph. We write $uv$ for the edge $\{u,v\}$,
$N(v)$ for the open neighborhood, $N[v]=N(v)\cup\{v\}$, and
$N[Z]=\bigcup_{z\in Z}N[z]$. For a set $X$, write $X\setminus v$
for $X\setminus\{v\}$. A \emph{plane graph} is a planar graph
with a fixed drawing without crossings. We work on the sphere, so
there is no need to choose an outer face. For a \emph{planar graph},
such a drawing exists, but no drawing is fixed.
For $z\in\mathbb R$, we write $z^+=\max\{z,0\}$.

\begin{definition}
\label{def:owners}
Let $D$ be a dominating set. Its vertices are called \textbf{centers}.
All other vertices are \textbf{non-centers}. An \textbf{owner assignment}
is a map $m:V(G)\to D$ such that $m(d)=d$ for $d\in D$, and
$xm(x)\in E(G)$ for $x\notin D$. Thus each center owns itself, and
each non-center is assigned to one of its neighbors in $D$.
\end{definition}

\begin{definition}
\label{def:owner-blocks}
Fix a dominating set $D$ and an owner assignment $m$.
For $d\in D$, the \textbf{owner block} $B_d=m^{-1}(d)$ is the set of vertices assigned
to $d$. The edges $dx$, for $x\in B_d\setminus d$, form a star
with center $d$ that spans $B_d$. We call this star $T_d$ the
\textbf{owner star} of $d$.
\end{definition}

\begin{definition}
\label{def:foreign-contacts}
For $x\notin D$, a block is \textbf{foreign} if it is not $B_{m(x)}$.
A \textbf{foreign contact} means that $x$ has a neighbor in a foreign block.
This neighbor does not have to be the center of that block.
Define the set of centers of the blocks with which $x$ has such a
contact, and its size, by
\[
 \Cout(x)=\{c\in D\setminus m(x):N(x)\cap B_c\ne\varnothing\},
 \qquad k_x=|\Cout(x)|.
\]
\end{definition}

Thus $k_x$ counts foreign blocks, not edges or centers adjacent to $x$.
Several neighbors in the same foreign block count only once.
See graph $G$ in \cref{fig:owners}.

\begin{definition}
\label{def:contact-weight}
Define the \textbf{total weight} over all non-centers by
\[
 W=\sum_{x\in V(G)\setminus D}(k_x-2)^+.
\]
The \textbf{weight} of a non-center $x$ is $(k_x-2)^+$. It is positive exactly
when $k_x\ge3$.
\end{definition}

Extra edges inside a block are allowed, so a block need not be just a star.
Centers are not
included in $W$, even when they have neighbors in many foreign blocks.

\begin{figure}[!htbp]
\centering
\begin{tikzpicture}[scale=.9, every node/.style={font=\small}]
\tikzset{center/.style={circle,draw,fill=blue!15,minimum size=7mm},
  vertex/.style={circle,draw,fill=white,minimum size=6mm,inner sep=0pt},
  box/.style={rounded corners,draw=blue!60!black,fill=blue!3}}
\draw[box] (-.6,-1.1) rectangle (2.5,1.1);
\node[center] (d) at (0,0) {$d$};
\node[vertex] (x) at (1.6,0) {$x$};
\draw[thick,-{Stealth}] (x)-- node[above=3mm] {$m(x)=d$} (d);
\node at (.9,-.75) {$B_d$};
\foreach \i/\yy in {1/1.35}{
  \draw[box] (4.1,\yy-.55) rectangle (6.7,\yy+.55);
  \node[vertex] (y\i) at (4.65,\yy) {$y_\i$};
  \node[center] (c\i) at (6.1,\yy) {$c_\i$};
  \draw[thick,-{Stealth}] (y\i)--(c\i);
  \draw[dashed,red!65!black,thick] (x)--(y\i);
}
\draw[box] (5.5,-.55) rectangle (6.7,.55);
\node[center] (c2) at (6.1,0) {$c_2$};
\draw[dashed,red!65!black,thick] (x)--(c2);
\draw[box] (4.1,-2.55) rectangle (6.7,-.65);
\node[vertex] (y3) at (4.65,-1.1) {$y_3$};
\node[vertex] (y3prime) at (4.65,-2.1) {$y_3'$};
\node[center] (c3) at (6.1,-1.6) {$c_3$};
\draw[thick,-{Stealth}] (y3)--(c3);
\draw[thick,-{Stealth}] (y3prime)--(c3);
\draw[thick] (y3)--(y3prime);
\draw[dashed,red!65!black,thick] (x)--(y3);
\draw[dashed,red!65!black,thick] (x)--(y3prime);
\node at (-1.15,0) {$G$};
\node[center] (hd) at (9,0) {$d$};
\foreach \i/\yy in {1/1.35,2/0,3/-1.6}{
  \node[center] (hc\i) at (11.7,\yy) {$c_\i$};
  \draw[thick] (hd)--(hc\i);
}
\node at (8.1,0) {$H$};
\end{tikzpicture}
\caption{$G$: a graph with its owner blocks. $H$: its contact graph.
Here $k_x=3$.}
\label{fig:owners}
\end{figure}

\begin{definition}
\label{def:contact}
For the fixed $D$ and $m$, the \textbf{contact graph} $H$ has vertex set $D$.
Distinct centers $d,c\in D$ are adjacent in $H$ if and only if an edge of $G$
joins their owner blocks $B_d$ and $B_c$.
\end{definition}

See graph $H$ in \cref{fig:owners}.
To obtain $H$, contract each owner star to its center, delete loops,
and keep only one edge between each pair of centers.
If $G$ is planar, then $H$ is planar, because edge contractions and
deletions preserve planarity.

For every $x\notin D$, we have $\Cout(x)\subseteq N_H(m(x))$.
Paths in $H$ lift to connected subgraphs of $G$: choose an edge between
each pair of consecutive blocks and connect its endpoints through the
owner stars. Thus the connected components of $G$ correspond to those
of $H$. The same argument after deleting a block shows that $G-B_d$
is connected whenever $H-d$ is connected.
The graph $H$ is \emph{2-connected} if $|V(H)|\ge3$ and it remains
connected after any one vertex is deleted.

\begin{theorem}
\label{thm:structure}
For every planar graph $G$, dominating set $D$, and owner assignment $m$,
\[
 W=\sum_{x\in V(G)\setminus D}(k_x-2)^+\le(4|D|-12)^+.
\]
If $W>0$, this gives $W\le4|D|-12$.
\end{theorem}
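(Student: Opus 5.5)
We follow the route sketched in the introduction. First we reduce to a triangulation with a 2-connected contact graph, prove the bound there by a corner-counting argument, and then remove the 2-connectivity assumption by splitting at cut vertices of $H$.

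\emph{Step 1 (monotonicity and normalization).} Adding edges of $G$ never decreases any $k_x$, since it only adds foreign contacts. Thus we may fix a plane drawing and add edges until $G$ is maximal planar. When $|V(G)|\ge3$, this makes every face a triangle. We must only avoid creating parallel edges; in the simple maximal planar graph this is automatic. Adding edges may merge components of $H$ or increase its connectivity, which helps. We may also assume $|D|\ge3$. If $|D|\le2$, every $k_x\le1$, so $W=0$ and the claim is trivial.

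\emph{Step 2 (the 2-connected triangulation case).} Suppose $G$ is a triangulation and $H$ is 2-connected. For a non-center $x$, look at the cyclic order of its neighbors around $x$. Colour each neighbor by its block. The blocks in $\Cout(x)$ appear as colour classes along this cycle, together with the colour $m(x)$, which occurs at least at the owner. Consecutive neighbors $u,v$ of $x$ span a triangular face $xuv$. Call the corner at $x$ \emph{tricoloured} if $x,u,v$ lie in three distinct blocks. The cyclic sequence of colours around $x$ uses $k_x+1$ colours. Hence the number of colour changes is at least $k_x+1$. At most two of these changes involve the colour $m(x)$ adjacent to a foreign colour, counted over maximal runs. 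We would like to show that at least $k_x-1\ge (k_x-2)^+$ tricoloured corners sit at $x$; the exact count needs care, because runs of $m(x)$ can repeat.

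Each tricoloured triangle of $G$ maps to a triangular face of $H$ after contraction. We then charge corners at non-centers to corners of faces of $H$ at their three blocks. The key accounting is as follows. A face of $H$ with three corners receives charge from at most its three tricoloured triangles. Each block, viewed from its own side, loses two unused corners: the two boundary transitions of its own colour. This gives
\[
W\le 2F(H)-2|D|\le 2(2|D|-4)-\ldots,
\]
and we aim to tune this into $4|D|-12$ via Euler's formula. With $F\le 2|D|-4$, the target is $W\le \#\{\text{tricoloured corners}\}-2|D|$ together with $\#\text{corners}\le 3F\le 6|D|-12$.

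\emph{Step 3 (cut vertices).} If $H$ is not 2-connected, decompose it into blocks (2-connected pieces) $H_1,\dots,H_r$ along cut vertices. For each piece, restrict $G$ to the union of owner blocks of its centers, and split each cut center $d$ by distributing $B_d$ among the pieces. A non-center $x$ whose foreign contacts lie in several pieces has its weight split, and we bound the loss using $(a+b-2)^+\le(a-2)^++(b-2)^++2$. Summing $4|D_i|-12$ over the pieces with $\sum|D_i|=|D|+(r-1)$ gives $4|D|-12-8(r-1)$. This leaves slack of $8$ per cut, which must absorb the extra terms of $2$ per split vertex. Pieces that are single edges contribute $W_i=0$, so we use slack there.

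The main obstacle is Step 2's local counting: we must show that every non-center contributes at least $(k_x-2)^+$ tricoloured corners, and that globally each center "wastes" at least two corners of $H$. This is where sharpness forces the constant $4$. The second difficulty is Step 3: controlling how many non-centers straddle a cut vertex. We expect planarity to give at most two such straddling vertices per cut, contributing at most $4\le 8$.
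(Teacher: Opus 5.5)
Your outline has the right shape: in the 2-connected case you aim for $W\le 3F_3-2|D|$ and $3F_3\le 6|D|-12$, and you then split at cut vertices. However, the two inequalities that carry the 2-connected case are not proved, and the mechanism you suggest for the two wasted corners per block does not work.

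\textbf{Step 2.} At $x$, the corners where the color $m(x)$ meets a foreign color are bicolored, so they are not part of the budget $3F_3$. Locally you only get $t(x)\ge k_x-1$. Even that needs the interval property: if $H-m(x)$ is connected, then $G-B_{m(x)}$ is connected, so the triangle on $x$, $m(x)$ and a neighbor $p\in B_{m(x)}$ of $x$ cannot separate two foreign neighbors of $x$. This saves only one corner per vertex with $k_x\ge3$. A block containing at most one such vertex is therefore not shown to waste two corners, and without that the constant $4$ is out of reach.

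The missing idea is a count over the whole block. Since $G-B_d$ is connected, all edges leaving the connected graph $G[B_d]$ lie in one face of it. Walk once around the boundary of a thin neighborhood of $G[B_d]$, labelling each leaving edge by the owner of its outer endpoint. In a triangulation, the label changes are exactly the $n_d$ tricolored corners in $B_d$. Hence $n_d\ge2$, since $\deg_H(d)\ge2$, and $n_d\ge k_x$ for every $x\in B_d\setminus\{d\}$. A case split on how many $x\in B_d$ have $k_x\ge3$ (none, one, or at least two, the last using $\sum_x t(x)\le n_d$) then gives $\sum_{x\in B_d\setminus\{d\}}(k_x-2)^+\le n_d-2$.

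The corner budget also cannot be routed through faces of $H$. Distinct tricolored faces of $G$ can use the same triple of blocks, and $F_3$ can exceed the number of faces of $H$. Take the octahedron with antipodal pairs $\{a,d\}$, $\{b_1,c_2\}$, $\{b_2,c_1\}$ and blocks $\{a\},\{d\},\{b_1,b_2\},\{c_1,c_2\}$. Its contact graph is $K_4-e$, with three faces, while $F_3=4$. You must prove $F_3\le 2|D|-4$ for $G$ itself. One way is the simple bipartite planar graph joining each contracted block to the tricolored faces meeting it, which gives $3F_3\le 2(|D|+F_3)-4$.

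\textbf{Step 3.} You only conjecture the planarity bound on straddling vertices, and ``at most two straddlers'' is not the right form, since one vertex of $B_v$ can straddle many components. What is needed is $\sum_{x\in Z}(|I_x|-1)\le 2(\ell-1)$ at a cut vertex $v$ whose removal leaves $\ell$ components. This follows from the simple bipartite planar graph on $Z\cup\{v,c_1,\dots,c_\ell\}$: contract the blocks of each component to $c_i$, and keep the owner edges $vx$ and one edge $xc_i$ per active index.

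More seriously, your slack bookkeeping has the wrong sign for bridges. A piece with $|D_i|=2$ is bounded by $(4|D_i|-12)^+=0$, which is $4$ more than $4|D_i|-12$, so bridges consume slack rather than supply it. When $H$ is a tree, every piece is a bridge, and your estimate gives only $W\le 4(r-1)=4|D|-8$.

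The paper avoids this by splitting at one cut vertex at a time into pieces $C_i\cup\{v\}$, using strong induction on $|D|$ together with $W\le\sum_i W_i+4(\ell-1)$. Then some piece has at least three centers unless $H$ is a star. For a star, $k_x=|I_x|$ gives directly $W\le\sum_{x\in Z}(|I_x|-1)\le 2\ell-2\le 4|D|-12$.
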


The proof has two parts. In \cref{sec:triangulations}, we prove the bound
when the contact graph $H$ is 2-connected. In \cref{sec:cuts}, we split
at cut vertices and then consider disconnected graphs to complete the proof.

\section{Tricolored faces and the 2-connected case}
\label{sec:triangulations}

The next lemma combines the standard extension to a triangulation
\cite{Diestel2025} with monotonicity of the contact counts.

\begin{lemma}
\label{lem:triangulation}
Let $G$ be a planar graph with $|V(G)| \geq 3$, and let $D$ be a
dominating set of $G$ with fixed owner blocks.
If the contact graph $H$ of $G$ is 2-connected, then there exists a
graph $G^+$ with $V(G^+)=V(G)$ and $E(G)\subseteq E(G^+)$ such that:
\begin{enumerate}[(i)]
\item $G^+$ has a plane embedding in which every face,
including the outer face, is a triangle.
\item The set $D$ remains dominating in $G^+$, with the same owner blocks.
The contact graph of $G^+$ is 2-connected.
\item Each non-center has at least as many foreign contacts in $G^+$
as in $G$, so the total weight does not decrease.
\end{enumerate}
\end{lemma}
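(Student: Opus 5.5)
We fix a plane embedding of $G$ and add edges one at a time, never changing $D$ or the owner map $m$. The key observation is monotonicity: adding an edge to $G$ leaves every block $B_d$ and every owner edge $x\,m(x)$ in place. So $D$ stays dominating with the same owner blocks. Moreover, for each non-center $x$ the set $\Cout(x)$ can only grow, since $N(x)$ only grows and the blocks are fixed. Hence $k_x$ and $(k_x-2)^+$ are nondecreasing, which gives (iii). In the same way, the contact graph of the new graph contains $H$ as a spanning subgraph on the same vertex set $D$. Adding edges to a 2-connected graph keeps it 2-connected, which gives (ii). So everything reduces to (i): show that the embedded graph can be extended by edges, without multi-edges and without adding vertices, to a triangulation of the sphere.

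For (i) we follow the standard argument (as in \cite{Diestel2025}): a simple plane graph on $n\ge3$ vertices that is edge-maximal planar has every face bounded by a triangle. Concretely, we keep adding edges inside faces while the graph stays simple and plane, and we stop when no edge can be added. This terminates because a simple planar graph has at most $3n-6$ edges. Then we invoke the classical fact that a maximal planar graph with $n\ge3$ vertices is a triangulation. Two points need care.

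\begin{itemize}
\item Connectivity. The graph must first be made connected, and in fact 2-connected, so that each face is bounded by a cycle. Here we could also note that $G$ is already connected, since $H$ is connected and paths in $H$ lift to $G$. Maximality handles this anyway.
\item Multi-edges. If a face boundary walk has length at least $4$ but repeats vertices or contains a non-adjacent pair that is already joined outside the face, we use the usual trick. For a face with boundary $v_1v_2\dots v_k$, $k\ge4$, either $v_1v_3$ is absent and can be drawn inside the face, or $v_1v_3$ is an edge outside the face. In the latter case, by the Jordan curve theorem, $v_2v_4$ cannot be an edge outside, so it can be drawn inside.
\end{itemize}

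The hypothesis $|V(G)|\ge3$ is exactly what makes a triangulation of the sphere possible.

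The main obstacle is purely the topological bookkeeping in (i): that maximality forces triangular faces even when face boundaries are not cycles, for example with cut vertices or bridges in intermediate graphs. We plan to cite the textbook result for edge-maximal plane graphs rather than reprove it. The only genuinely problem-specific content is the monotonicity observation, and it is immediate because foreign contacts depend only on adjacency to fixed blocks. We would also remark that $G^+$ has the same vertex set, so the weight $W$ of $G^+$ is a sum over the same non-centers, each term at least as large as before.
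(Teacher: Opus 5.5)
Your proposal is correct and follows the paper's proof. You extend a plane embedding to a triangulation on the same vertex set, using edge-maximality and the textbook characterization of maximally plane graphs, which the paper simply cites; you then observe that the owner edges and blocks are untouched, that each $k_x$ is nondecreasing because neighborhoods only grow, and that the new contact graph contains $H$ as a spanning subgraph and so remains 2-connected.
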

\begin{proof}
Extend a plane embedding of $G$ to a triangulation $G^+$ on the same
vertices. All owner edges remain, so $D$ dominates $G^+$ with the same
blocks. Since $N_G(x)\subseteq N_{G^+}(x)$, no foreign contact count
decreases. The contact graph of $G^+$ contains $H$ as a spanning
subgraph, so it remains 2-connected.
\end{proof}

\begin{definition}
\label{def:tricolored-corners}
Let $G$ be a plane graph whose vertices are partitioned into blocks.
A triangular face is \textbf{tricolored} if its three vertices lie
in different blocks. A \textbf{tricolored corner at $x$} is a pair
$(f,x)$, where $f$ is a tricolored face and $x$ is one of its vertices.
We write $F_3$ for the number of tricolored faces, counting the outer
face if it is tricolored, and $t(x)$ for the number of tricolored corners at $x$.
For an owner block $B_d$, the number of tricolored corners in the block is
\[
 n_d=\sum_{x\in B_d}t(x) \quad(d\in D).
\]
This sum includes the center $d$.
\end{definition}

Since $D$ is unchanged and $W$ does not decrease, it suffices to prove
the bound for the triangulation $G^+$.
Each tricolored face has three corners, one at each vertex.
We show that the total weight in each owner block is at most the number
of tricolored corners in that block minus two.
Summing over the blocks gives $W\le3F_3-2|D|$.
We then prove $F_3\le2|D|-4$, which gives $W\le4|D|-12$.
We start by showing that the foreign neighbors of a non-center form
one interval around it. Within this interval, each change of owner
between consecutive neighbors gives a tricolored face.

\begin{definition}
\label{def:cyclic-interval}
The \textbf{cyclic order} of the neighbors of a vertex in a plane graph
is their clockwise order around that vertex. An \textbf{interval} is
a consecutive part of this cyclic order, possibly empty or the whole order.
\end{definition}

\begin{lemma}
\label{lem:arc}
Let $G$ be a plane graph with contact graph $H$, and let
$x\in B_d\setminus d$. If $H-d$ is connected, then the set
$N_G(x)\setminus B_d$ is an interval, possibly empty, in the cyclic
order of $N_G(x)$ around $x$.
\end{lemma}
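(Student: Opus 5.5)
We argue by contradiction, using the Jordan curve theorem on a short cycle that runs through $x$ and stays inside $B_d$. Write $F=N_G(x)\setminus B_d$ for the foreign neighbors of $x$ and $O=N_G(x)\cap B_d$ for the own-block neighbors. If $F$ is empty or equals all of $N_G(x)$, there is nothing to prove. Note that $O\ne\varnothing$, since $d=m(x)$ is a neighbor of $x$ and lies in $B_d$.

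If $F$ is not an interval of the cyclic order around $x$, then neither is $O$. Two complementary subsets of a cyclic order fail to be intervals exactly when they alternate. So there are neighbors $a_1,b_1,a_2,b_2$, appearing in this cyclic order around $x$, with $a_1,a_2\in O$ and $b_1,b_2\in F$.

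Next we close up a cycle inside the block. The owner star $T_d$ joins every vertex of $B_d$ to $d$, and $x$ is a leaf of this star, so there is an $a_1$--$a_2$ path $P$ inside $B_d$ that avoids $x$. Concretely, $P$ is $a_1 d a_2$ if neither endpoint is $d$, and the single edge otherwise. Then $C=x a_1 P a_2 x$ is a cycle of $G$ with $V(C)\subseteq B_d$. In the plane embedding, $C$ is a Jordan curve. Near $x$, the curve uses exactly the two edges $xa_1$ and $xa_2$. These edges split a small disk around $x$ into two sectors, and by the alternation $b_1$ lies in one sector and $b_2$ in the other.

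The initial segments of the edges $xb_1$ and $xb_2$ therefore leave $x$ into different sides of $C$. These edges do not cross $C$, and their endpoints $b_1,b_2$ are not on $C$ because they lie outside $B_d\supseteq V(C)$. Hence $b_1$ and $b_2$ lie in different faces of $C$. Consequently, every $b_1$--$b_2$ path in $G$ meets $V(C)\subseteq B_d$, so $b_1$ and $b_2$ are in different components of $G-B_d$.

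Finally, the paper observed after \cref{def:contact} that $G-B_d$ is connected whenever $H-d$ is connected. This contradicts the previous paragraph, so $F$ is an interval.

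The main obstacle is the topological step: checking that alternation in the rotation at $x$ really puts $b_1$ and $b_2$ on opposite sides of $C$. I would handle it with the standard local argument. The two edges of $C$ at $x$ divide a small disk around $x$ into two sectors, each contained in one face of $C$, and the rotation order decides which sector each edge $xb_i$ enters. The degenerate case $d\in\{a_1,a_2\}$, where $C$ is a triangle, needs only the remark that $P$ is then a single edge.
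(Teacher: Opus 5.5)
Your proof is correct and is essentially the paper's argument: a short cycle through $x$ inside $B_d$ separates two alternating foreign neighbors, which contradicts the connectivity of $G-B_d$ that follows from that of $H-d$. The only difference is cosmetic. The paper takes $d$ itself as one of the two own-block neighbors, which is possible because $d$ lies in one of the own-block arcs, so its separating cycle is always the triangle $xdp$; you allow an arbitrary alternating pair and close it through $d$, which in general gives the $4$-cycle $xa_1da_2$.
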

\begin{proof}
If the foreign neighbors of $x$ do not form an interval, then the cyclic
order contains $d,u,p,w$, where $p\in B_d$ and $u,w\notin B_d$.
The owner edges $xd,dp$ and the edge $px$ form a triangle in $B_d$.
The order at $x$ puts $u$ and $w$ on opposite sides of this triangle
(\cref{fig:interval-triangle}). Since $H-d$ is connected, so is $G-B_d$.
A path from $u$ to $w$ in $G-B_d$ would cross the triangle, contradicting
planarity. Thus the foreign neighbors of $x$ form an interval.
\end{proof}

\begin{figure}[!htbp]
\centering
\begin{tikzpicture}[every node/.style={font=\small},
  vertex/.style={circle,draw,fill=white,minimum size=6mm,inner sep=0pt},
  center/.style={vertex,fill=blue!15}]
\coordinate (x) at (0,0);
\coordinate (d) at (3,.7);
\coordinate (p) at (3,-.7);
\coordinate (u) at (1.7,0);
\coordinate (w) at (-1.2,0);
\draw[very thick] (x)--(d)--(p)--cycle;
\draw (x)--(u);
\draw (x)--(w);
\draw[dashed,gray] (x) circle[radius=.4];
\draw[thick,-{Stealth},gray] (65:.4)
  arc[start angle=65,end angle=35,radius=.4];
\node[vertex,fill=red!10] at (x) {$x$};
\node[center] at (d) {$d$};
\node[vertex] at (p) {$p$};
\node[vertex] at (u) {$u$};
\node[vertex] at (w) {$w$};
\end{tikzpicture}
\caption{The triangle $C=xdpx$ separates $u$ and $w$.}
\label{fig:interval-triangle}
\end{figure}

For the next lemmas, up to and including
\cref{lem:save-two}, let $G$ be a simple plane triangulation with
at least four vertices. Fix a dominating set $D$ and an owner
assignment $m$. The contact graph $H$ and all contact counts $k_x$
refer to this graph $G$.
If $|D|\le3$, then $W=0$. We therefore use these lemmas in the final
proof only when $|D|\ge4$, which also gives $|V(G)|\ge4$.

We now bound $t(x)$ from below in terms of the number $k_x$ of
foreign blocks seen by $x$.

\begin{lemma}
\label{lem:vertex-corners}
Let $x\in V(G)\setminus D$ and put $d=m(x)$.
If $H-d$ is connected, then
\[
 t(x)\ge(k_x-1)^+.
\]
\end{lemma}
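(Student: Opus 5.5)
We plan to read the tricolored corners at $x$ off the cyclic order of its neighbors. Since $G$ is a simple plane triangulation with at least four vertices, the neighbors of $x$ form a cycle $v_0v_1\cdots v_{r-1}v_0$ in the clockwise order, and every consecutive pair $v_i,v_{i+1}$ (indices mod $r$) spans a triangular face $xv_iv_{i+1}$. This face is tricolored exactly when $v_i$ and $v_{i+1}$ lie in two different blocks, both different from $B_d$. So $t(x)$ equals the number of indices $i$ with $m(v_i)\ne m(v_{i+1})$ and $m(v_i),m(v_{i+1})\ne d$.

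We may assume $k_x\ge2$, since otherwise $(k_x-1)^+=0$ and there is nothing to prove. Because $H-d$ is connected, \cref{lem:arc} applies. It shows that the foreign neighbors $N_G(x)\setminus B_d$ form an interval of this cyclic order. Since $d\in N_G(x)$ lies in $B_d$, this interval is not the whole cycle. It is therefore a path $v_a,v_{a+1},\dots,v_b$ of consecutive neighbors whose two ends are flanked by neighbors in $B_d$. The faces $xv_iv_{i+1}$ with $a\le i<b$ are exactly the faces at $x$ with both other corners foreign. Each such face is tricolored if and only if the owner changes from $v_i$ to $v_{i+1}$.

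Along the sequence $v_a,\dots,v_b$, the owners take $k_x$ distinct values, because every foreign block that $x$ touches contains some $v_i$. A sequence that takes $k$ distinct values must change value at least $k-1$ times between consecutive terms. Hence at least $k_x-1$ of the faces $xv_iv_{i+1}$ with $a\le i<b$ are tricolored, so $t(x)\ge k_x-1$.

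The one point that needs care is that the foreign neighbors form a genuine path, not a cyclic interval that wraps around. If they wrapped around, the change of owner from $v_b$ back to $v_a$ would be counted without a corresponding face, and the count could be off by one. This is ruled out because $d$ itself is a neighbor of $x$ in $B_d$, so the interval misses at least one position of the cycle. We also use that distinct faces at $x$ give distinct corners $(f,x)$. This holds because in a simple triangulation with at least four vertices the faces around $x$ are pairwise distinct.
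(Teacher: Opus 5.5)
Your proof is correct and is essentially the paper's argument: \cref{lem:arc} lists the foreign neighbors of $x$ as one consecutive run, the owners along the run change at least $k_x-1$ times, and each change gives a distinct tricolored face at $x$ because the faces around $x$ in a simple triangulation on at least four vertices are pairwise distinct. Your wrap-around caveat is harmless but not needed. You count changes only between consecutive terms of a linear list, and each such pair already spans a genuine face with $x$, so the count works whether or not the run closes up into the whole cycle.
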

\begin{proof}
If $k_x=0$, the claim follows from $t(x)\ge0$. Otherwise,
\cref{lem:arc} lets us list all neighbors outside $B_d$ in interval
order as $y_1,\ldots,y_s$. Their owner sequence contains exactly
$k_x$ distinct labels, none equal to $d$. Each label after the first
causes a change when it first appears, so there are at least $k_x-1$
changes.

At each change, the consecutive neighbors $y_i,y_{i+1}$ form a face
with $x$. Its three owners are distinct, so it is tricolored.
Different consecutive pairs give different faces in a simple plane
triangulation on at least four vertices. Hence $t(x)\ge k_x-1$.
\end{proof}

\begin{lemma}
\label{lem:boundary}
Assume that $H$ is 2-connected and let $d\in D$.
Label each edge $xu$ with $x\in B_d$ and $u\notin B_d$ by $m(u)$.
There is a cyclic ordering of these edges with exactly $n_d$ label
changes, including any change between the last and first edges.
\end{lemma}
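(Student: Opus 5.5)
The plan is to read the cyclic ordering off the boundary of the region occupied by $B_d$. Since $B_d$ contains the spanning owner star $T_d$, the subgraph $G[B_d]$ is connected. Since $H$ is 2-connected, $H-d$ is connected, and so $G-B_d$ is connected, as noted after \cref{def:contact}.

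First I show that the union of the closed faces incident to $B_d$ is a closed disk, or the whole sphere minus an open disk. The faces of $G[B_d]$ are open regions of the sphere. Exactly one of them contains $G-B_d$, because $G-B_d$ is connected and every vertex outside $B_d$ lies off $G[B_d]$. All other faces of $G[B_d]$ contain no vertices. In the triangulation $G$ they are therefore subdivided only by edges inside $B_d$, so they carry no edges $xu$ with $u\notin B_d$. Let $R$ be the face of $G[B_d]$ containing $G-B_d$. Since $G[B_d]$ is connected, $R$ is an open disk, and its boundary is a closed walk $W_d$ in $G[B_d]$. Every edge $xu$ with $x\in B_d$ and $u\notin B_d$ lies in $R$ and leaves $W_d$ at some corner of the walk.

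The cyclic order comes from traversing $W_d$. At each visit to a corner $x$ of $W_d$, I take the edges from $x$ into $R$ in their rotation order at $x$. This lists every boundary edge exactly once: a vertex $x$ visited several times by $W_d$ has its angular sectors into $R$ split among the visits. The result is a cyclic sequence $e_1,\dots,e_r$ of edges. Two consecutive edges, either within one corner or across a walk edge $xx'$ of $W_d$, together with the edge or vertex joining them, bound a triangular face of $G$ inside $R$.

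Next I match label changes with tricolored corners in $B_d$. Consider two consecutive edges at the same corner $x$, say $xu$ and $xu'$. They span a triangle $xuu'$, and since $x\in B_d$ and $u,u'\notin B_d$, the labels differ exactly when this face is tricolored. That face contributes exactly one tricolored corner to $B_d$, namely the one at $x$. Now consider the transition across a walk edge $xx'$ of $W_d$, from the last edge $xu$ at $x$ to the first edge $x'u'$ at $x'$. The triangle inside $R$ on $xx'$ is a face $xx'v$ with $v\notin B_d$, so $u=u'=v$ and the labels agree; such a face has two corners in $B_d$ and is never tricolored. Faces with all three vertices in $B_d$ lie outside $R$ and are not tricolored. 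Conversely, every tricolored face with a corner in $B_d$ has exactly one vertex in $B_d$ and lies in $R$, so it appears exactly once as a consecutive pair at a single corner visit. Hence the label changes are in bijection with the tricolored corners in $B_d$, and their number is $n_d$.

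The main obstacle is the degenerate geometry. $W_d$ may revisit vertices, for example when $B_d$ has cut vertices or is a bare star, so I must check that corner visits partition the boundary edges without double counting. The easiest route is to argue with angular sectors around each vertex instead of with vertices. A second degenerate case is $B_d=\{d\}$, where $W_d$ is a single vertex and the order is just the rotation at $d$. Here every face around $d$ gives one consecutive pair, and the count is immediate.
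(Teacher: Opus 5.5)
Your proof is correct and is essentially the paper's argument. Walking the facial boundary of the face $R$ of $G[B_d]$ sector by sector is the combinatorial form of the paper's walk around the boundary circle $C$ of a thin neighborhood of $G[B_d]$. Your two kinds of transitions, within one corner and across a walk edge $xx'$, are exactly the paper's cases $z\notin B_d$ and $z\in B_d$.

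One small blemish: your opening claim that the union of the closed faces incident to $B_d$ is a closed disk is false in general. Around a vertex $u\notin B_d$, the faces incident to $B_d$ can form two separate wedges, which makes $u$ a pinch point. You never use this claim, since everything afterwards is about $R$, so you can simply delete that sentence.
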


\begin{proof}
The owner star makes $G[B_d]$ connected. Since $H$ is 2-connected,
$H-d$ is connected. Paths in $H-d$ lift through the other connected
owner blocks, so $G-B_d$ is connected. Thus all edges leaving $B_d$
enter the same face $F$ of $G[B_d]$.

We consider a thin closed neighborhood of $G[B_d]$, formed by vertex disks
and edge strips. Since $G[B_d]$ is connected, this neighborhood has
one boundary circle $C$ in $F$. The disks and strips are small enough
that $C$ crosses each leaving edge exactly once. Walking around $C$
gives the required cyclic order.

Each edge $vw$ leaving $B_d$, with $v\in B_d$ and $w\notin B_d$,
has label $m(w)$.
After crossing $vw$, the curve $C$ passes through the incident triangular face
$vwz$. If $z\notin B_d$, the next crossing is with $vz$, and the label
changes exactly when $m(w)\ne m(z)$. Since $m(v)=d$, this is exactly a
tricolored corner at $v$. If $z\in B_d$, the next crossing is with $zw$,
so the label stays $m(w)$.
Conversely, a tricolored face $vwz$ with $v\in B_d$ has
$w,z\notin B_d$ and $m(w)\ne m(z)$. Walking around $C$ passes between
$vw$ and $vz$ exactly once, so its corner at $v$ gives one label change.
Thus the total number of label changes is $n_d$
(See \cref{fig:boundary-order}).
\end{proof}

\begin{figure}[!htbp]
\centering
\begin{tikzpicture}[scale=1.1,every node/.style={font=\small},
  vertex/.style={circle,draw,minimum size=5.5mm,inner sep=0pt},
  labelbox/.style={rounded corners=1mm,draw,align=center,
    minimum width=14mm,minimum height=8mm,font=\footnotesize},
  order/.style={-{Stealth[length=1.8mm]},thick},
  change/.style={order,red!75!black},
  boundary/.style={blue!65!black,dashed,thick}]
\coordinate (bd) at (0,0);
\coordinate (bx) at (1.6,0);
\coordinate (bc1) at (3.1,1.45);
\coordinate (bc2) at (3.1,-1.45);
\fill[blue!4] (.8,0) ellipse[x radius=1.3,y radius=.65];
\draw[thick] (bd)--(bc1)--(bc2)--(bd);
\draw[thick] (bc1)--(bx)--(bc2);
\draw[blue!65!black,very thick] (bd)--(bx);
\draw[boundary] (.8,0) ellipse[x radius=1.3,y radius=.65];
\draw[blue!65!black,thick,-{Stealth[length=1.8mm]}]
  ({.8+1.3*cos(160)},{.65*sin(160)})
  arc[start angle=160,end angle=105,x radius=1.3,y radius=.65];
\node[text=blue!65!black] at (-.55,.6) {$C$};
\node at (.8,1.15) {$B_d=\{d,x\}$};
\draw[red!75!black,line width=1.05pt]
  ({.34*cos(25.1)},{.34*sin(25.1)})
  arc[start angle=25.1,end angle=334.9,radius=.34];
\draw[red!75!black,line width=1.05pt]
  ({1.6+.34*cos(-44)},{.34*sin(-44)})
  arc[start angle=-44,end angle=44,radius=.34];
\node[vertex,fill=blue!20] at (bd) {$d$};
\node[vertex,fill=blue!10] at (bx) {$x$};
\node[vertex,fill=green!20] at (bc1) {$c_1$};
\node[vertex,fill=orange!25] at (bc2) {$c_2$};
\node at (1.2,-2.15) {(a) Boundary curve $C$};
\node[labelbox,fill=green!10] (e1) at (6,.85)
  {$dc_1$\\[-1pt]label $c_1$};
\node[labelbox,fill=green!10] (e2) at (8.55,.85)
  {$xc_1$\\[-1pt]label $c_1$};
\node[labelbox,fill=orange!15] (e3) at (8.55,-.85)
  {$xc_2$\\[-1pt]label $c_2$};
\node[labelbox,fill=orange!15] (e4) at (6,-.85)
  {$dc_2$\\[-1pt]label $c_2$};
\draw[order,gray!80] (e1)--node[above,font=\footnotesize]{same}(e2);
\draw[change] (e2)--node[right,align=center,font=\footnotesize]
  {change\\at $x$}(e3);
\draw[order,gray!80] (e3)--node[below,font=\footnotesize]{same}(e4);
\draw[change] (e4)--node[left,align=center,font=\footnotesize]
  {change\\at $d$}(e1);
\node at (7.25,-2.15) {(b) Cyclic order of edge labels};
\end{tikzpicture}
\caption{The curve $C$ around $B_d=\{d,x\}$ and the cyclic edge labels.
The two tricolored faces have vertices $d,c_1,c_2$ (the outer face)
and $x,c_1,c_2$.
The two label changes give $n_d=t(d)+t(x)=2$.}
\label{fig:boundary-order}
\end{figure}

\begin{lemma}
\label{lem:block-three}
If $H$ is 2-connected and $|D|\ge3$, then for every $d\in D$:
\begin{enumerate}[(i)]
\item $n_d\ge2$.
\item $n_d\ge k_x$ for every $x\in B_d\setminus\{d\}$.
\item $\displaystyle\sum_{x\in B_d\setminus\{d\}}(k_x-1)^+\le n_d$.
\end{enumerate}
\end{lemma}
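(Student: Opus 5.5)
We work throughout with the cyclic ordering of the edges leaving $B_d$ given by \cref{lem:boundary}, which has exactly $n_d$ label changes. Recall that $H$ is 2-connected and $|D|\ge3$, so $H-d$ is connected. Consequently \cref{lem:arc} and \cref{lem:vertex-corners} apply to every $x\in B_d\setminus d$.

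For (i), we show that at least two distinct labels occur on the leaving edges. Because $H$ is 2-connected on at least three vertices, $d$ has at least two neighbors in $H$. Each $H$-neighbor $c$ of $d$ gives an edge of $G$ from $B_d$ into $B_c$, so the label $c$ occurs. A cyclic sequence that uses at least two distinct labels has at least two label changes, since a cyclic sequence cannot change label only once. Hence $n_d\ge2$.

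For (ii), fix $x\in B_d\setminus d$. Every block in $\Cout(x)$ receives an edge from $x$, so at least $k_x$ distinct labels occur in the cyclic order. A cyclic sequence with $\ell\ge2$ distinct labels has at least $\ell$ changes, because each label's maximal runs are each entered once. So $n_d\ge k_x$ whenever $k_x\ge2$. When $k_x\le1$, part (i) already gives $n_d\ge2\ge k_x$.

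For (iii), we apply \cref{lem:vertex-corners}, which gives $(k_x-1)^+\le t(x)$ for each $x\in B_d\setminus d$. Summing over $x$ yields
\[
 \sum_{x\in B_d\setminus d}(k_x-1)^+\le\sum_{x\in B_d\setminus d}t(x)\le n_d,
\]
where the last step holds because $t(d)\ge0$.

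The one delicate point is in (ii): the labels along the cyclic order must really include all of $\Cout(x)$. This holds because every foreign neighbor of $x$ gives a leaving edge from $B_d$, and each leaving edge is crossed by the curve $C$ of \cref{lem:boundary}. Otherwise each step uses only the counting fact that a cyclic sequence with at least two distinct labels has at least as many changes as labels.
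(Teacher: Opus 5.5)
Your proof is correct and follows the paper's argument. You use the cyclic order from \cref{lem:boundary} together with the fact that a cyclic label sequence with at least two distinct labels has at least as many changes as labels: this gives (i) from $\deg_H(d)\ge2$ and (ii) from the $k_x$ labels seen at $x$, and you get (iii) by summing \cref{lem:vertex-corners} and dropping $t(d)\ge0$. Your separate handling of the case $k_x\le1$ in (ii) is a harmless detail that the paper leaves implicit.
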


\begin{proof}
In the cyclic order from \cref{lem:boundary}, every distinct label has
a last occurrence before a different label. Thus the number $n_d$ of
changes is at least the number of distinct labels. There are at least
two labels because $d$ has degree at least two in $H$. This proves (i).
All $k_x$ labels seen by any $x\in B_d\setminus\{d\}$ occur in the same
order, giving (ii). Finally, \cref{lem:vertex-corners} gives
\[
 \sum_{x\in B_d\setminus\{d\}}(k_x-1)^+
 \le\sum_{x\in B_d\setminus\{d\}}t(x)
 =n_d-t(d)\le n_d,
\]
which proves (iii).
\end{proof}

\begin{lemma}
\label{lem:save-two}
If $H$ is 2-connected and $|D|\ge3$, then for every $d\in D$,
\[
 \sum_{x\in B_d\setminus\{d\}}(k_x-2)^+\le n_d-2.
\]
\end{lemma}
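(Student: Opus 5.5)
The plan is to derive the bound directly from \cref{lem:block-three}, by comparing $(k_x-2)^+$ with $(k_x-1)^+$ and doing a short case analysis on how many vertices of the block have positive weight. Let $S=\{x\in B_d\setminus\{d\}:k_x\ge3\}$ be the set of non-centers in $B_d$ with positive weight. For $x\in S$ we have $(k_x-2)^+=(k_x-1)^+-1$, and for the other non-centers $(k_x-2)^+=0$. Therefore
\[
 \sum_{x\in B_d\setminus\{d\}}(k_x-2)^+=\sum_{x\in S}(k_x-1)-|S|.
\]

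\textbf{Case $|S|\ge2$.} Since every term $(k_x-1)^+$ is nonnegative, we can drop the vertices outside $S$. The inequality obtained in the proof of \cref{lem:block-three}(iii), which uses \cref{lem:vertex-corners}, then gives
\[
 \sum_{x\in S}(k_x-1)\le\sum_{x\in B_d\setminus\{d\}}(k_x-1)^+\le n_d-t(d)\le n_d .
\]
Subtracting $|S|\ge2$ from both sides gives the claim. More generally, this argument works whenever $|S|+t(d)\ge2$.

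\textbf{Case $S=\varnothing$.} The left-hand side is $0$, and $n_d-2\ge0$ by \cref{lem:block-three}(i).

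\textbf{Case $S=\{x\}$.} The left-hand side equals $k_x-2$. By \cref{lem:block-three}(ii), $k_x\le n_d$, so $k_x-2\le n_d-2$.

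These three cases are exhaustive, so the lemma follows. The hypotheses that $H$ is 2-connected and $|D|\ge3$ are needed only to invoke \cref{lem:block-three}. That lemma in turn relies on \cref{lem:vertex-corners}, whose hypothesis that $H-d$ is connected follows from the 2-connectivity of $H$.

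I do not expect a real obstacle here. The only point needing care is the single-vertex case, where the generic count loses one unit. There, part (ii) of \cref{lem:block-three} supplies exactly the missing unit, because the boundary cyclic order from \cref{lem:boundary} has at least as many label changes as $x$ has distinct foreign labels.
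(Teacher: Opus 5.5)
Your proposal is correct and follows the paper's proof exactly: the same three-way split on the set of non-centers with $k_x\ge3$, using \cref{lem:block-three}(i), (ii), and (iii) for the empty, singleton, and larger cases respectively. Your side remark that the generic case already suffices whenever $|S|+t(d)\ge2$ is also valid, since the chain in the proof of (iii) actually gives the sharper bound $n_d-t(d)$.
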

\begin{proof}
Let $U_d=\{x\in B_d\setminus\{d\}:k_x\ge3\}$.
If $U_d$ is empty, the bound follows from $n_d\ge2$.
If $U_d=\{x\}$, it follows from $k_x\le n_d$.
If $|U_d|\ge2$, then
\[
 \sum_{x\in U_d}(k_x-2)
 =\sum_{x\in U_d}(k_x-1)^+-|U_d|
 \overset{(1)}{\le} n_d-|U_d|\overset{(2)}{\le} n_d-2.
\]
Inequality (1) follows from \cref{lem:block-three}(iii), since
$U_d\subseteq B_d\setminus\{d\}$ and all summands $(k_x-1)^+$ are nonnegative.
Inequality (2) uses $|U_d|\ge2$.
Vertices outside $U_d$ have weight zero, completing the proof.
\end{proof}

We now bound the total number of tricolored corners available across
all blocks. The next fact uses the standard edge estimate for simple
bipartite planar graphs.

\begin{fact}
\label{fact:faces}
Let $G$ be a connected plane graph whose vertices are partitioned
into $q$ connected blocks, where $q\ge2$. Then $G$ has at most $2q-4$
tricolored triangular faces.
\end{fact}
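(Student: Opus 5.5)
Contract each block to a single vertex. Since the blocks are connected, the quotient $Q$ is a plane multigraph on $q$ vertices, obtained by contracting edges inside blocks. We then keep one edge per adjacent pair of blocks, which gives a simple plane graph $Q'$. The aim is to injectively map tricolored faces of $G$ to triangular faces of a suitable simple plane graph on $q$ vertices, or more robustly to count via a bipartite auxiliary graph, as the paper hints.

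The construction I would use is a bipartite vertex--face incidence graph. Build a bipartite plane graph $R$ with one vertex $b_i$ for each block (placed by contracting the connected block to a point) and one vertex $\phi_f$ for each tricolored face $f$, placed inside $f$. Join $\phi_f$ to the three blocks owning the vertices of $f$, drawing the curves inside $f$ to its three corners and then through the contracted block.

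\begin{enumerate}[1.]
\item Show that $R$ is plane. Contracting a connected subgraph within the plane preserves planarity, and distinct faces have disjoint interiors, so the curves do not cross. $R$ is simple because the three corners of a tricolored face lie in distinct blocks.
\item Count: $R$ has $q+F_3$ vertices and $3F_3$ edges.
\item If $R$ has at least three vertices and is simple bipartite planar, then $|E(R)|\le 2|V(R)|-4$. This gives $3F_3\le 2(q+F_3)-4$, i.e.\ $F_3\le 2q-4$.
\item Handle degenerate cases separately. If $F_3=0$ the bound needs $2q-4\ge0$, which holds since $q\ge2$. If $F_3\ge1$ then $q\ge3$, so $|V(R)|\ge4$ and the edge estimate applies, without needing connectivity of $R$.
\end{enumerate}

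The main obstacle is making step 1 rigorous. Contracting a block that is not a tree, or whose contraction creates loops or multiple edges, must not make face regions overlap, and two different tricolored faces sharing the same three blocks must give distinct, noncrossing vertices $\phi_f$. I would avoid contraction altogether and argue topologically. For each block take a thin closed neighborhood of a spanning tree of $G[B_i]$; this is a disk. Route each curve from $\phi_f$ to the boundary of the disk of the corresponding corner's block, staying inside $f$ and then inside the edge strip near that corner. Shrinking each disk to a point then yields an embedding of $R$, and simplicity of $R$ holds automatically since each $\phi_f$ has three distinct block neighbors.
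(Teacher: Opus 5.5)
Your proof is correct and is essentially the paper's argument: the paper also puts a new vertex in each tricolored face, joins it to the three corners, contracts a spanning tree of each block, deletes the remaining edges of $G$, and applies the bipartite planar bound $3F_3\le 2(q+F_3)-4$. Your thin neighbourhood of a spanning tree is a topological version of that spanning-tree contraction, and the opening detour through the quotients $Q$ and $Q'$ is not needed.
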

\begin{proof}
Let $F_3$ count these faces.
If $F_3=0$, the claim follows from $q\ge2$. Otherwise, place a new
vertex in each tricolored triangular face and join it to the three
vertices of that face. Contract a spanning tree of each block and
delete all remaining edges of $G$.
The resulting graph $J$ is planar and bipartite, with parts consisting
of the $q$ block vertices and the $F_3$ new vertices.
\Cref{fig:face-incidence} shows an example of these operations.
The graph $J$ is simple
because each new vertex has one edge to each of three distinct blocks.
It has $q+F_3\ge3$ vertices and $3F_3$ edges, so the bipartite planar
edge bound gives $3F_3\le2(q+F_3)-4$.
Rearranging yields $F_3\le2q-4$.
\end{proof}

\begin{figure}[!htbp]
\centering
\begin{tikzpicture}[every node/.style={font=\small},
  vertex/.style={circle,draw,minimum size=5mm,inner sep=0pt},
  face/.style={rectangle,draw,fill=white,minimum size=5mm,inner sep=0pt},
  added/.style={thick,dashed}]
\draw[rounded corners=2.5mm,draw=blue!45,fill=blue!8]
  (-.3,-.3) rectangle (1.75,.3);
\node[vertex,fill=blue!20] (d1) at (0,0) {$d_1$};
\node[vertex,fill=blue!20] (x) at (1.45,0) {$x$};
\node[vertex,fill=green!20] (d2) at (2.7,.85) {$d_2$};
\node[vertex,fill=orange!25] (d3) at (2.7,-.85) {$d_3$};
\draw[thick] (d1)--(d2)--(d3)--(d1)--(x);
\draw[thick] (d2)--(x)--(d3);
\node[face] (leftz1) at (2.3,0) {$z_1$};
\node[face] (leftz2) at (3.7,0) {$z_2$};
\foreach \v in {x,d2,d3}{\draw[added] (leftz1)--(\v);}
\foreach \v in {d2,d3}{\draw[added] (leftz2)--(\v);}
\draw[added] (leftz2.north) .. controls (3.7,1.65) and (0,1.65)
  .. (d1.north);
\node at (1.85,-1.4) {(a) $G$ with added face vertices};
\draw[thick,-{Stealth}] (4.25,.15)--(6.55,.15);
\node[font=\footnotesize] at (5.4,.55) {contract blocks};
\node[font=\footnotesize] at (5.4,-.25) {delete old edges};
\node[face] (z1) at (7.1,0) {$z_1$};
\node[face] (z2) at (10.7,0) {$z_2$};
\node[vertex,fill=blue!20] (b1) at (8.9,.85) {$b_1$};
\node[vertex,fill=green!20] (b2) at (8.9,0) {$b_2$};
\node[vertex,fill=orange!25] (b3) at (8.9,-.85) {$b_3$};
\foreach \i in {1,2,3}{
  \draw[added] (z1)--(b\i);
  \draw[added] (z2)--(b\i);
}
\node at (8.9,-1.4) {(b) $J$};
\end{tikzpicture}
\caption{Construction of $J$ in \cref{fact:faces}, with $q=3$ and
$F_3=2$. (a) The blocks are $\{d_1,x\}$, $\{d_2\}$, and $\{d_3\}$.
Squares mark the two tricolored faces, including the outer face.
(b) Contracting the blocks and keeping only dashed edges gives $J$.}
\label{fig:face-incidence}
\end{figure}

\begin{lemma}
\label{lem:corner-budget}
Let $G$ be a simple plane triangulation with at least four vertices,
a dominating set $D$, and a fixed owner assignment.
If $|D|\ge2$, then
\[
 \sum_{d\in D}n_d=3F_3\le6|D|-12.
\]
\end{lemma}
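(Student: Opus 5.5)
The equality $\sum_{d\in D}n_d=3F_3$ is a double count, and the inequality should follow from \cref{fact:faces} with $q=|D|$.

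For the equality, recall that $n_d=\sum_{x\in B_d}t(x)$ and that the owner blocks partition $V(G)$. Hence $\sum_{d\in D}n_d=\sum_{x\in V(G)}t(x)$, which is the total number of tricolored corners. Each tricolored face has exactly three corners, one at each of its three vertices. These vertices are distinct because $G$ is simple and each face is a triangle. Different faces give different corners, since a corner is a pair $(f,x)$. So the total number of tricolored corners is $3F_3$. The outer face is counted if it is tricolored, consistent with \cref{def:tricolored-corners}, because we work on the sphere.

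For the inequality, I will apply \cref{fact:faces} to $G$ with the partition into the $q=|D|\ge2$ owner blocks. I need to check its hypotheses. First, $G$ is connected, being a triangulation. Second, each block $B_d$ induces a connected subgraph, because the owner star $T_d$ spans $B_d$ and all its edges lie in $G$. The fact then gives $F_3\le2|D|-4$, and therefore $3F_3\le6|D|-12$.

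The only point that needs care is whether \cref{fact:faces} counts faces the same way as here, including the outer face and faces whose three vertices lie in different blocks. Its proof places one new vertex in every tricolored triangular face of the plane graph, which includes the outer face on the sphere. So the two counts agree. The hypothesis of at least four vertices is not essential here. It only ensures that the faces are genuine distinct triangles, so that corners are well defined. No 2-connectivity of $H$ is needed for this lemma.
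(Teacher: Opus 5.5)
Your proposal is correct and follows the paper's proof: the identity $\sum_{d\in D}n_d=3F_3$ is the same double count of tricolored corners over the partition of $V(G)$ into owner blocks. The inequality comes, as in the paper, from applying \cref{fact:faces} with $q=|D|$, using that each owner star makes its block connected and that a triangulation is connected.
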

\begin{proof}
Each tricolored face contributes three corners.
Each corner is counted in exactly one $n_d$, because its vertex belongs
to exactly one owner block. Hence $\sum_{d\in D}n_d=3F_3$. The connected owner blocks
satisfy the assumptions of \cref{fact:faces}, which gives
$3F_3\le6|D|-12$.
\end{proof}

To bound $W$ when $H$ is 2-connected, we combine the saving of two
corners per block in \cref{lem:save-two} with the total corner bound
in \cref{lem:corner-budget}.

\begin{proposition}
\label{prop:two-connected}
Let $G$ be a planar graph with dominating set $D$, fixed owner blocks,
and contact graph $H$. If $H$ is 2-connected, then $W\le4|D|-12$.
\end{proposition}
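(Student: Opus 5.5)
First, I reduce to a triangulation using \cref{lem:triangulation}. Since $H$ is 2-connected, $|D|=|V(H)|\ge3$, and so $|V(G)|\ge3$. The lemma then gives a plane triangulation $G^+$ on $V(G)$ with the same $D$ and the same owner blocks. Its contact graph is still 2-connected, and its total weight is at least $W$. It therefore suffices to prove $W\le4|D|-12$ for $G^+$.

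Next I handle small cases. If $|D|=3$, then each non-center sees at most two foreign blocks, so $k_x\le2$ and $W=0$. The right side is $4\cdot3-12=0$, so the bound holds. Now assume $|D|\ge4$. Then $|V(G^+)|\ge4$, and $G^+$ is a simple plane triangulation with at least four vertices, which is exactly the setting of the preceding lemmas.

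Then I sum the per-block bound. For each $d\in D$, \cref{lem:save-two} applies because $H$ is 2-connected and $|D|\ge3$. It gives
\[
\sum_{x\in B_d\setminus\{d\}}(k_x-2)^+\le n_d-2 .
\]
The blocks partition the non-centers, and centers contribute nothing to $W$. Summing over $d\in D$ therefore gives $W\le\sum_{d}n_d-2|D|$. By \cref{lem:corner-budget}, which applies since $|D|\ge2$, this is at most $6|D|-12-2|D|=4|D|-12$.

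The only delicate point is checking that the hypotheses carry over to $G^+$. The lemmas of this section are stated for the contact graph of the triangulation itself, and \cref{lem:triangulation}(ii) ensures that this contact graph is still 2-connected. The case $|D|=3$ must be separated out because the lemmas assume at least four vertices. Beyond that, the argument is just bookkeeping.
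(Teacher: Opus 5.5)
Your proposal is correct and matches the paper's proof essentially step for step: handle $|D|=3$ directly, pass to the triangulation $G^+$ from \cref{lem:triangulation}, sum \cref{lem:save-two} over the blocks, and finish with \cref{lem:corner-budget}. The only difference is cosmetic: you triangulate before splitting off the case $|D|=3$, while the paper splits first, and this changes nothing.
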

\begin{proof}
Since $H$ is 2-connected, $|D|\ge3$. If $|D|=3$, every non-center
has at most two foreign contacts, so $W=0=4|D|-12$.

Suppose that $|D|\ge4$. By \cref{lem:triangulation}, add edges to obtain
a simple plane triangulation $G^+$ with the same owner blocks and a
2-connected contact graph. Let $k_x^+$, $n_d^+$, and $W^+$ denote its
contact counts, corner counts, and total weight. Adding edges does not
decrease the contact counts, so $W\le W^+$. Now
\cref{lem:save-two,lem:corner-budget} give
\[
 W\le W^+
 =\sum_{d\in D}\sum_{x\in B_d\setminus\{d\}}(k_x^+-2)^+
 \le\sum_{d\in D}(n_d^+-2)
 \le(6|D|-12)-2|D|=4|D|-12.
\]
\end{proof}

\section{Cut vertices and disconnected contact graphs}
\label{sec:cuts}

The local corner inequality (\cref{lem:vertex-corners}) needs connectivity
of $H-d$. To extend the bound from \cref{prop:two-connected} to every
contact graph, we first split a connected graph at a cut vertex.
This gives smaller owner instances, and only the weights of vertices
owned by the cut vertex can change. We bound the total loss in these
weights using planarity, then apply induction on the number of centers.
Finally, we sum over the connected components of $H$ to prove
\cref{thm:structure} and derive the six-contact bound.

\subsection{Splitting at a cut vertex}

We first describe the smaller instances and identify which contact
counts stay the same. Throughout this subsection, $H$ is connected
and $v$ is a cut vertex of $H$.

\begin{definition}
\label{def:cut-pieces}
Let $C_1,\ldots,C_\ell$, with $\ell\ge2$, be the vertex sets of the
connected components of $H-v$. For each $i$, set
\[
 X_i=C_i\cup\{v\},\quad V_i=\bigcup_{c\in X_i}B_c,\quad
 G_i=G[V_i].
\]
Each vertex keeps its original owner: define $m_i:V_i\to X_i$ by
$m_i(x)=m(x)$ for every $x\in V_i$.
We call $(G_i,X_i,m_i)$ the $i$th \textbf{piece} of the decomposition at $v$.
Let $W_i$ denote its total weight, defined as in \cref{def:contact-weight}.
\end{definition}

To form $H[X_i]$, we take the component $H[C_i]$ and add $v$ with its
edges to that component. To form $G_i$, we keep all owner blocks with
centers in $C_i$ and the whole block $B_v$. Two different graphs $G_i$
and $G_j$ have exactly the vertices of $B_v$ in common, so
$V_i\cap V_j=B_v$ for $i\ne j$.
The construction from \cref{def:cut-pieces} is shown in \cref{fig:cut-pieces}.

\begin{figure}[!t]
\centering
\begin{tikzpicture}[x=.98cm,y=.95cm,every node/.style={font=\small},
  ctr/.style={circle,draw=black!75,fill=blue!15,minimum size=5.2mm,
    inner sep=0pt,font=\scriptsize},
  vert/.style={circle,draw=black!75,fill=white,minimum size=4.8mm,
    inner sep=0pt,font=\scriptsize},
  leaf/.style={circle,draw=black!75,fill=white,minimum size=3.5mm,
    inner sep=0pt},
  cut/.style={ctr,fill=orange!30},
  owner/.style={draw=black!80,line width=.9pt},
  contact/.style={draw=red!65!black,dash pattern=on 2pt off 1.5pt,line width=.7pt},
  hedge/.style={draw=black!75,line width=.65pt},
  shared/.style={rounded corners=4pt,draw=orange!65!black,fill=orange!7},
  heading/.style={font=\small,align=center}]
\begin{scope}[yshift=3.25cm]
  \node[heading] at (3.3,3.3) {(a) Contact graph $H$};
  \node[cut] (hv) at (3.3,.75) {$v$};
  \foreach \i/\xx/\col in {1/1/blue,2/3.3/teal,3/5.6/violet}{
    \draw[rounded corners=4pt,draw=\col!65!black,fill=\col!5]
      (\xx-1.05,1.25) rectangle (\xx+1.05,2.85);
    \node at (\xx,2.58) {$C_\i$};
    \node[ctr,minimum size=5.6mm] (hc\i) at (\xx-.5,1.63) {$c_\i$};
    \node[ctr,minimum size=5.6mm] (hd\i) at (\xx+.5,2.05) {};
    \draw[hedge] (hc\i)--(hd\i) (hc\i)--(hv);
  }
  \draw[-{Stealth[length=2.3mm]},thick] (6.9,1.3)--(7.6,1.3)
    node[midway,above=2mm,font=\footnotesize] {split};
  \foreach \i/\xx/\col in {1/8.7/blue,2/10.95/teal,3/13.2/violet}{
    \node[heading] at (\xx,3.3) {$H[X_\i]$};
    \draw[rounded corners=4pt,draw=\col!65!black,fill=\col!5]
      (\xx-.86,1.25) rectangle (\xx+.86,2.85);
    \node at (\xx,2.58) {$C_\i$};
    \node[ctr] (pc\i) at (\xx-.43,1.63) {$c_\i$};
    \node[ctr] (pd\i) at (\xx+.43,2.05) {};
    \node[cut] (pv\i) at (\xx,.75) {$v$};
    \draw[hedge] (pc\i)--(pd\i) (pc\i)--(pv\i);
  }
\end{scope}
\node[heading] at (3.3,3.2) {(b) Owner instance $G$};
\foreach \i/\xx/\col in {1/1/blue,2/3.3/teal,3/5.6/violet}{
  \draw[rounded corners=4pt,draw=\col!65!black,fill=\col!5]
    (\xx-1.05,.8) rectangle (\xx+1.05,2.8);
  \node[inner sep=1pt] at (\xx,2.57)
    {$V_\i\setminus B_v$};
  \node[ctr,minimum size=5.6mm] (gc\i) at (\xx-.6,1.95) {$c_\i$};
  \node[ctr,minimum size=5.6mm] (gd\i) at (\xx+.6,1.95) {};
  \node[leaf,minimum size=3.8mm] (gy\i) at (\xx-.6,1.1) {};
  \node[leaf,minimum size=3.8mm] (gz\i) at (\xx+.6,1.1) {};
  \draw[owner] (gc\i)--(gy\i) (gd\i)--(gz\i);
  \draw[contact] (gc\i)--(gd\i);
}
\draw[shared] (2.25,-.85) rectangle (4.35,.65);
\node at (2.6,.05) {$B_v$};
\node[cut,minimum size=5.6mm] (gv) at (2.8,-.45) {$v$};
\node[vert] (gx) at (3.5,.1) {$x$};
\node[vert] (gz) at (3.86,-.45) {$z$};
\draw[owner] (gv)--(gx) (gv)--(gz);
\foreach \i in {1,2,3}{\draw[contact] (gx)--(gy\i);}
\draw[-{Stealth[length=2.3mm]},thick] (6.9,1.05)--(7.6,1.05)
  node[midway,above=2mm,font=\footnotesize] {restrict};
\foreach \i/\xx/\col in {1/8.7/blue,2/10.95/teal,3/13.2/violet}{
  \node[heading] at (\xx,3.2) {$G_\i$};
  \draw[rounded corners=4pt,draw=\col!65!black,fill=\col!5]
    (\xx-.9,.8) rectangle (\xx+.9,2.8);
  \node[font=\footnotesize,inner sep=1pt] at (\xx,2.57)
    {$V_\i\setminus B_v$};
  \node[ctr] (ic\i) at (\xx-.48,1.95) {$c_\i$};
  \node[ctr] (id\i) at (\xx+.48,1.95) {};
  \node[leaf] (iy\i) at (\xx-.48,1.1) {};
  \node[leaf] (iz\i) at (\xx+.48,1.1) {};
  \draw[owner] (ic\i)--(iy\i) (id\i)--(iz\i);
  \draw[contact] (ic\i)--(id\i);
  \draw[shared] (\xx-.9,-.85) rectangle (\xx+.9,.65);
  \node at (\xx-.42,.4) {$B_v$};
  \node[cut] (iv\i) at (\xx-.48,-.45) {$v$};
  \node[vert] (ix\i) at (\xx+.15,.1) {$x$};
  \node[vert] (izb\i) at (\xx+.48,-.45) {$z$};
  \draw[owner] (iv\i)--(ix\i) (iv\i)--(izb\i);
  \draw[contact] (ix\i)--(iy\i);
}
\draw[owner] (.65,-1.35)--(1.2,-1.35)
  node[right=1mm,font=\footnotesize] {owner edge};
\draw[contact] (3.25,-1.35)--(3.8,-1.35)
  node[right=1mm,font=\footnotesize] {edge between blocks};
\node[font=\footnotesize] at (10.95,-1.35)
  {The same block $B_v$ occurs in every $G_i$.};
\end{tikzpicture}
\caption{Splitting at $v$: (a) contact graphs, (b) induced owner instances.
The pieces share the same block $B_v$, while the contacts of $x$ split
between them.}
\label{fig:cut-pieces}
\end{figure}

\begin{lemma}
\label{lem:induced-pieces}
Each piece is a planar owner instance with the original blocks indexed
by $X_i$. Its contact graph $H[X_i]$ is connected and has at least two
vertices. Moreover,
\[
 \sum_i |X_i|=|D|-1+\ell.
\]
Every non-center whose owner lies in $C_i$ appears only in the $i$th
piece and keeps its foreign contact count there.
\end{lemma}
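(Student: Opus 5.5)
The lemma makes four claims, and I will verify them in order, directly from \cref{def:cut-pieces} and the facts about contact graphs recorded after \cref{def:contact}.

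\emph{The piece is a planar owner instance.} The graph $G_i=G[V_i]$ is an induced subgraph of a planar graph, so it is planar. The set $X_i\subseteq V_i$ dominates $G_i$ with owner map $m_i$. Indeed, $m_i(x)=m(x)\in X_i$ for each $x\in V_i$, since $V_i$ is a union of whole blocks $B_c$ with $c\in X_i$. Moreover, for $x\notin D$ the owner edge $x\,m(x)$ has both ends in $V_i$, so it is an edge of $G_i$. Hence $m_i$ is an owner assignment in the sense of \cref{def:owners}, and its blocks are exactly the original blocks $B_c$ for $c\in X_i$.

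\emph{The contact graph of the piece is $H[X_i]$.} Two distinct centers $c,c'\in X_i$ are adjacent in the contact graph of $G_i$ iff some edge of $G_i$ joins $B_c$ and $B_{c'}$. Because $G_i$ is induced and contains both blocks, this happens iff some edge of $G$ joins them, that is, iff $cc'\in E(H)$. Next I show $H[X_i]$ is connected. $H[C_i]$ is connected by definition. Since $H$ is connected and $\ell\ge2$, the vertex $v$ has a neighbor in every component $C_i$ of $H-v$; otherwise $C_i$ would be a whole component of $H$. So $H[X_i]$ is connected, and $|X_i|\ge2$.

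\emph{The counting identity.} The sets $C_1,\ldots,C_\ell$ partition $D\setminus\{v\}$. Therefore
\[
 \sum_i|X_i|=\sum_i(|C_i|+1)=|D|-1+\ell.
\]

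\emph{Non-centers owned in $C_i$.} Let $x\notin D$ with $m(x)=c\in C_i$. Then $x\in B_c\subseteq V_i$. For $j\ne i$ we have $V_j\cap V_i=B_v$, and $x\notin B_v$, so $x$ lies only in the $i$th piece. It remains to show that the count $k_x$ is unchanged. By the observation after \cref{def:contact}, $\Cout(x)\subseteq N_H(c)$. Every $H$-neighbor of $c\in C_i$ lies in $C_i\cup\{v\}=X_i$, because $C_i$ is a component of $H-v$. So every foreign block met by $x$ is a block of the piece, and every neighbor of $x$ in such a block lies in $V_i$. Since $G_i$ is induced, $N_{G_i}(x)=N_G(x)\cap V_i$. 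This set meets exactly the same foreign blocks $B_{c'}$ as $N_G(x)$ does, and none of them is $B_c$. Hence $\Cout(x)$, and so $k_x$, is the same in both graphs.

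The only step needing care is this last one: the fact that $N_H(c)\subseteq X_i$ holds precisely because $c\ne v$. For vertices of $B_v$ the contacts really do split among the pieces, as \cref{fig:cut-pieces} shows.
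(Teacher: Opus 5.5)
Your proof is correct and follows the paper's argument step for step. You use that $G_i$ is an induced subgraph containing whole blocks, so planarity, owner edges and the contact graph $H[X_i]$ carry over; you get connectivity from an edge between $v$ and each $C_i$, the size identity from the partition of $D\setminus\{v\}$, and preserved contact counts from $N_H(c)\subseteq C_i\cup\{v\}$ for $c\in C_i$. You also spell out details the paper leaves implicit, such as why $v$ has a neighbor in every $C_i$. The paper's proof additionally records $|X_i|<|D|$, which follows from $\ell\ge2$. This is not part of the statement, but the induction in \cref{prop:connected} draws it from this lemma, so it is worth adding.
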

\begin{proof}
Each $G_i$ is an induced subgraph of $G$ containing whole owner blocks,
so it is planar and keeps every owner edge. Its contact graph is
$H[X_i]$. This graph is connected because $H[C_i]$ is connected and
has an edge to $v$.

The nonempty sets $C_i$ partition $D\setminus\{v\}$, and $\ell\ge2$.
Hence $2\le|X_i|<|D|$ and
\[
 \sum_i|X_i|=\sum_i(|C_i|+1)=|D|-1+\ell.
\]
If $x$ has owner $c\in C_i$, every foreign contact has its center in
$N_H(c)\subseteq C_i\cup\{v\}$. Thus all its contacts remain in $G_i$,
and its block occurs in no other piece.
\end{proof}

By \cref{lem:induced-pieces}, we only need to control the weights of
non-centers in $B_v$. This block occurs in every piece, while each of
its foreign contacts belongs to exactly one component of $H-v$.

\begin{definition}
\label{def:piece-contacts}
For $x\in B_v\setminus\{v\}$, define
\[
 k_x^{(i)}=|\Cout(x)\cap C_i|,\quad
 I_x=\{i\in\{1,\ldots,\ell\}:k_x^{(i)}>0\}.
\]
Thus $k_x^{(i)}$ is the foreign contact count of $x$ in $G_i$, and
$k_x=\sum_i k_x^{(i)}$.
We call $I_x$ the \textbf{set of active indices} of $x$.
An index $i$ is active exactly when $x$ has a neighbor in $V_i\setminus B_v$.
\end{definition}

The next lemma bounds the loss in the weight of one such vertex.

\begin{lemma}
\label{lem:split-charge}
For every $x\in B_v\setminus\{v\}$,
\[
 (k_x-2)^+\le\sum_i(k_x^{(i)}-2)^++2(|I_x|-1)^+.
\]
\end{lemma}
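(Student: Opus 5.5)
This inequality is purely arithmetic: it uses only $k_x=\sum_i k_x^{(i)}$ and the fact that $k_x^{(i)}\ge1$ exactly for $i\in I_x$ (\cref{def:piece-contacts}). No planarity is needed. I would write $a_i=k_x^{(i)}$ for $i\in I_x$ and $r=|I_x|$, and split into cases on $r$.

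If $r=0$, then $k_x=0$ and both sides vanish, up to the nonnegative right-hand side. If $r=1$, then $k_x=a_i$ for the single active index $i$, so the two sides agree exactly.

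Assume now that $r\ge2$. The right-hand side then contains the term $2(r-1)$. For each active index, $(a_i-2)^+\ge a_i-2$, so
\[
 \sum_i(k_x^{(i)}-2)^++2(r-1)\ge\sum_{i\in I_x}(a_i-2)+2r-2=k_x-2.
\]
The right-hand side is also nonnegative, so it is at least $(k_x-2)^+$.

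I do not expect a real obstacle here. The only point to check is that the bound $(a_i-2)^+\ge a_i-2$ already loses at most $2$ per active index. This is why the charge $2(|I_x|-1)^+$ suffices, and why one index is free: when $r=1$ there is no loss at all.
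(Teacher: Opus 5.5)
Your proof is correct and follows the paper's argument: the cases $|I_x|\le1$ give equality, and for $|I_x|\ge2$ you bound each active term by $(k_x^{(i)}-2)^+\ge k_x^{(i)}-2$ and add $2(|I_x|-1)$ to get $k_x-2$. The only difference is in the last step. The paper notes $k_x\ge|I_x|\ge2$, so $(k_x-2)^+=k_x-2$, while you use that the right-hand side is nonnegative; either suffices.
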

\begin{proof}
If $|I_x|\le1$, all contacts lie in at most one piece, and the bound
is an equality. Otherwise $|I_x|\ge2$.
For each $i\in I_x$, we have $k_x^{(i)}\ge1$, and
$k_x=\sum_{i\in I_x}k_x^{(i)}$. Hence $2\le|I_x|\le k_x$, and
\[
 \sum_i(k_x^{(i)}-2)^+
 =\sum_{i\in I_x}(k_x^{(i)}-2)^+
 \ge\sum_{i\in I_x}(k_x^{(i)}-2)=k_x-2|I_x|.
\]
Adding $2(|I_x|-1)$ gives $k_x-2=(k_x-2)^+$ on the right,
which proves the bound.
\end{proof}

We now bound the sum of the terms $2(|I_x|-1)^+$ from
\cref{lem:split-charge} over all $x\in B_v\setminus\{v\}$.
To do this, we count pairs $(x,i)$ with $i\in I_x$, that is,
with $x$ having a neighbor in $V_i\setminus B_v$.
The owner edges at $v$ allow us to represent these incidences in a
simple bipartite planar graph. We then apply the standard edge bound
from Euler's formula \cite{Diestel2025}.

\begin{lemma}
\label{lem:cross}
Let $Z=\{x\in B_v\setminus\{v\}:|I_x|\ge2\}$. Then
\mbox{$\sum_{x\in Z}(|I_x|-1)\le2(\ell-1)$}, where $\ell$ is the number
of connected components of $H-v$.
\end{lemma}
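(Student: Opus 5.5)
We build a bipartite incidence graph and apply the planar bipartite edge bound. Let $Z$ be as in the statement and assume $Z\neq\varnothing$; otherwise the left side is $0$ and there is nothing to prove.

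The first step is the construction. For each $i$, the set $V_i\setminus B_v$ is the union of the blocks with centers in $C_i$. Since $H[C_i]$ is connected and the owner stars are connected, $G[V_i\setminus B_v]$ is connected. So we may contract each $V_i\setminus B_v$ to a single vertex $a_i$, and these $\ell$ contractions happen in disjoint parts of $G$. We then delete every vertex of $B_v$ outside $Z\cup\{v\}$. Next we contract the edges $vx$ for $x\in B_v\setminus Z$; these vertices are deleted anyway, so we simply discard them.

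We keep $v$ as an extra vertex and keep the edges $x a_i$ for $x\in Z$ and $i\in I_x$. The edges at $v$ are useful only to keep the picture honest. For the count we use the graph $J$ with parts $Z$ and $\{a_1,\dots,a_\ell\}$, and with edge set $\{x a_i : i\in I_x\}$. This graph is a minor of $G$, obtained by contractions and deletions, so it is planar. It is simple and bipartite, and it has $\sum_{x\in Z}|I_x|$ edges.

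The second step is to get the sharper bound by using $v$. The plain bipartite bound gives $\sum|I_x|\le 2(|Z|+\ell)-4$, which yields $\sum(|I_x|-1)\le |Z|+2\ell-4$. This still contains $|Z|$, so it is too weak. We therefore add $v$ and join it to every $x\in Z$ by the owner edges $vx$. We also join $v$ to each $a_i$, because $v$ is adjacent in $H$ to $C_i$, so $G$ has an edge from $B_v$ to $V_i\setminus B_v$; we contract along the owner star to realize it. The difficulty is that $v$ lies in the same block as $Z$, so the graph is no longer bipartite. Instead we treat $v$ as lying on the $a$-side: $v$ is adjacent to every $x\in Z$ and to no $a_i$. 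The edges $va_i$ are simply dropped, since dropping edges keeps the graph a minor.

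The result is a simple bipartite planar graph with parts $Z$ and $\{v,a_1,\dots,a_\ell\}$, with $|Z|+\ell+1$ vertices and $|Z|+\sum_{x\in Z}|I_x|$ edges. The bipartite bound now gives
\[
|Z|+\sum_{x\in Z}|I_x|\le 2(|Z|+\ell+1)-4,
\]
which rearranges to $\sum_{x\in Z}(|I_x|-1)\le 2\ell-2=2(\ell-1)$, as required. The bound needs at least three vertices, and we have them because $|Z|\ge1$ and $\ell\ge2$.

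The main obstacle is making sure the minor operations are legitimate. The contractions defining the $a_i$ must be disjoint from $Z\cup\{v\}$, and this holds because the $a_i$ use only vertices outside $B_v$. Each retained edge $xa_i$ must come from an actual edge from $x$ into $V_i\setminus B_v$, which is exactly the definition of $i\in I_x$. The edges $vx$ for $x\in Z$ are owner edges of $G$, so they are present. Finally, $J$ is simple: parallel edges $xa_i$ merge into one, and no loops arise because the parts are disjoint.
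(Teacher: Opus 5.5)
Your proof is correct and is essentially the paper's argument: you contract each union of blocks over $C_i$ to a vertex, keep $v$ with its owner edges to $Z$ and one edge $xa_i$ for each active index, and apply the bipartite planar bound $|Z|+\sum_{x\in Z}|I_x|\le 2(|Z|+\ell+1)-4$ to the graph with parts $Z$ and $\{v,a_1,\dots,a_\ell\}$. The detour through the edges $va_i$ is unnecessary and slightly confused: bipartiteness depends only on the retained edges, not on $v$ sharing a block with $Z$, and contracting the owner star would absorb $Z$ into $v$. Since you drop those edges anyway, this does not affect the final graph or the count.
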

\begin{proof}
For each component $C_i$ of $H-v$, the blocks $B_c$ with $c\in C_i$
together induce a connected subgraph of $G$, since each block and
$H[C_i]$ are connected.
Contract this subgraph to a vertex $c_i$.
Keep $v$, the vertices of $Z$, and all contracted vertices
$c_1,\ldots,c_\ell$, together with the owner edges $vx$ for $x\in Z$
and one edge $xc_i$ for each $x\in Z$ and $i\in I_x$.
Delete all other vertices and edges. A contracted vertex with no
neighbor in $Z$ remains as an isolated vertex.
See \cref{fig:cross-piece-contraction} for an example.

This is a simple planar bipartite graph with parts
$Z$ and $\{v,c_1,\ldots,c_\ell\}$. It has $|Z|+\ell+1$ vertices.
There are $|Z|$ owner edges from $v$ to $Z$, and each $x\in Z$ has
$|I_x|$ edges to the contracted vertices. Thus the number of edges is
$|Z|+\sum_{x\in Z}|I_x|$. Since $\ell\ge2$, the graph has at least
three vertices. The bipartite planar edge bound applies, including
when the graph has isolated vertices, and gives
\[
 |Z|+\sum_{x\in Z}|I_x|\le2(|Z|+\ell+1)-4.
\]
Subtracting $2|Z|$ gives the claimed bound, also when $Z=\varnothing$.
\end{proof}

\begin{figure}[!ht]
\centering
\begin{tikzpicture}[x=1cm,y=1cm,every node/.style={font=\small},
  center/.style={circle,draw=black!75,fill=blue!15,minimum size=3.5mm,
    inner sep=0pt},
  leaf/.style={circle,draw=black!75,fill=white,minimum size=3mm,
    inner sep=0pt},
  cross/.style={circle,draw=violet!70!black,fill=violet!12,
    minimum size=6mm,inner sep=0pt},
  contracted/.style={circle,draw=blue!65!black,fill=blue!15,
    minimum size=6mm,inner sep=0pt},
  cut/.style={circle,draw=orange!65!black,fill=orange!25,
    minimum size=6mm,inner sep=0pt},
  owner/.style={draw=black!80,line width=.9pt},
  contact/.style={draw=red!65!black,dash pattern=on 2pt off 1.5pt,
    line width=.8pt}]
\node at (2.1,2.95) {(a) Before contraction};
\node[cross] (lx1) at (2.5,2.1) {$x_1$};
\node[cross] (lx2) at (2.5,-2.1) {$x_2$};
\node[cut] (lv) at (-.9,0) {$v$};
\draw[owner] (lv) to[out=90,in=180] (lx1);
\draw[owner] (lv) to[out=-90,in=180] (lx2);
\foreach \i/\xx in {1/.35,2/1.75,3/3.15,4/4.55}{
  \draw[rounded corners=4pt,draw=blue!55!black,fill=blue!5]
    (\xx-.6,-.95) rectangle (\xx+.6,.95);
  \node[font=\scriptsize] at (\xx+.32,0) {$C_\i$};
  \node[center] (la\i) at (\xx-.18,.28) {};
  \node[center] (lb\i) at (\xx-.18,-.28) {};
  \node[leaf] (lt\i) at (\xx-.31,.66) {};
  \node[leaf] (ld\i) at (\xx-.31,-.66) {};
  \draw[owner] (la\i)--(lt\i) (lb\i)--(ld\i);
  \draw[contact] (la\i)--(lb\i);
}
\foreach \i in {1,2,3}{\draw[contact] (lx1)--(lt\i);}
\foreach \i in {1,2}{\draw[contact] (lx2)--(ld\i);}
\draw[contact,rounded corners=5pt] (lv)--(-1.35,0)--(-1.35,2.55)
  --(4.24,2.55)--(lt4);
\node[font=\footnotesize] at (2.1,-2.75)
  {Shaded regions: blocks for each $C_i$.};
\draw[-{Stealth[length=2.3mm]},thick] (5.45,0)--(6.4,0)
  node[midway,above=2mm,align=center,font=\footnotesize] {contract\\and delete};
\begin{scope}[xshift=7.1cm]
  \node at (2.25,2.95) {(b) Resulting graph};
  \node[cross] (rx1) at (2.1,2.1) {$x_1$};
  \node[cross] (rx2) at (2.1,-2.1) {$x_2$};
  \node[cut] (rv) at (-.1,0) {$v$};
  \draw[owner] (rv)--(rx1) (rv)--(rx2);
  \foreach \i/\xx in {1/.95,2/2.1,3/3.25,4/4.45}{
    \node[contracted] (rc\i) at (\xx,0) {$c_\i$};
  }
  \foreach \i in {1,2,3}{\draw[contact] (rx1)--(rc\i);}
  \foreach \i in {1,2}{\draw[contact] (rx2)--(rc\i);}
  \node[font=\footnotesize] at (4.45,-.6) {isolated};
  \node[font=\footnotesize] at (2.25,-2.75)
    {$7$ vertices, $2+3+2=7$ edges};
\end{scope}
\draw[owner] (.55,-3.35)--(1.1,-3.35)
  node[right=1mm,font=\footnotesize] {owner edge};
\draw[contact] (3.6,-3.35)--(4.15,-3.35)
  node[right=1mm,font=\footnotesize] {edge between blocks};
\node[font=\footnotesize] at (9.3,-3.35) {$Z=\{x_1,x_2\}$, $\ell=4$};
\end{tikzpicture}
\caption{Construction in \cref{lem:cross}.
(a) Owner instance with one shaded region for each $C_i$.
(b) Result after contracting the regions and keeping the owner edges
from $v$ to $Z$ and the contacts between $Z$ and the contracted vertices.}
\label{fig:cross-piece-contraction}
\end{figure}

Combining the unchanged contact counts outside $B_v$ with
\cref{lem:split-charge,lem:cross} gives the estimate needed for induction.

\begin{corollary}
\label{cor:global-split}
For the pieces from \cref{def:cut-pieces},
\[
 W\le\sum_iW_i+4(\ell-1).
\]
\end{corollary}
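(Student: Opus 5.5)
We split the total weight $W$ according to the owner of each non-center. By \cref{lem:induced-pieces}, every non-center $x$ with $m(x)\in C_i$ lies only in the $i$th piece and keeps its foreign contact count there. So its weight $(k_x-2)^+$ appears exactly once in $\sum_i W_i$, namely in $W_i$. These vertices contribute the same amount to $W$ and to $\sum_iW_i$.

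It remains to compare the non-centers of $B_v\setminus\{v\}$. Each such $x$ appears in every piece, and by \cref{def:piece-contacts} its contact count in $G_i$ is $k_x^{(i)}$. Its total contribution to $\sum_iW_i$ is therefore $\sum_i(k_x^{(i)}-2)^+$. Writing $W=W_{\mathrm{out}}+W_v$, where $W_v=\sum_{x\in B_v\setminus\{v\}}(k_x-2)^+$, we get $\sum_iW_i=W_{\mathrm{out}}+\sum_{x\in B_v\setminus\{v\}}\sum_i(k_x^{(i)}-2)^+$.

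Summing \cref{lem:split-charge} over all $x\in B_v\setminus\{v\}$ gives
\[
W_v\le\sum_{x\in B_v\setminus\{v\}}\sum_i(k_x^{(i)}-2)^+ + 2\sum_{x\in B_v\setminus\{v\}}(|I_x|-1)^+ .
\]
The last sum only involves vertices with $|I_x|\ge2$, that is, the set $Z$ of \cref{lem:cross}. That lemma bounds it by $2(\ell-1)$, so the extra term is at most $4(\ell-1)$. Adding $W_{\mathrm{out}}$ to both sides yields $W\le\sum_iW_i+4(\ell-1)$.

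The only delicate point is the bookkeeping: each $W_i$ counts exactly the non-centers of $V_i$, and centers, including $v$, never contribute. Both facts follow from the definitions, so no step should be a real obstacle.
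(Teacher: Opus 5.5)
Your proposal is correct and follows the paper's proof essentially step by step. Non-centers owned by vertices in the $C_i$ keep their weight in exactly one piece, \cref{lem:split-charge} bounds the loss for each $x\in B_v\setminus\{v\}$, and \cref{lem:cross} bounds $\sum_{x\in Z}(|I_x|-1)$ by $2(\ell-1)$. The only cosmetic difference is that you sum the split-charge inequality over all of $B_v\setminus\{v\}$ and let $(|I_x|-1)^+$ vanish off $Z$, whereas the paper first notes that vertices with at most one active index keep their weight.
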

\begin{proof}
By \cref{lem:induced-pieces}, every non-center outside $B_v$ occurs in
exactly one piece and keeps its weight. The total weight of these vertices
is the same in $W$ and in $\sum_iW_i$.
For a non-center in $B_v$ with at most one active index, all contacts
lie in at most one piece, so its weight is also unchanged.
Thus only vertices in $Z$ contribute to $W-\sum_iW_i$.
By \cref{lem:split-charge}, we obtain
\[
 W-\sum_iW_i=\sum_{x\in Z}
 \left((k_x-2)^+-\sum_i(k_x^{(i)}-2)^+\right)
 \le2\sum_{x\in Z}(|I_x|-1).
\]
By \cref{lem:cross}, $\sum_{x\in Z}(|I_x|-1)\le2(\ell-1)$.
Adding $\sum_iW_i$ to both sides gives the claim.
\end{proof}

\subsection{Connected contact graphs}

We now combine \cref{cor:global-split} with the 2-connected case.
The size identity in \cref{lem:induced-pieces} accounts for the center
$v$ shared by all pieces.

\begin{proposition}
\label{prop:connected}
If $H$ is connected, then $W\le(4|D|-12)^+$.
\end{proposition}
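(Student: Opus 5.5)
We argue by strong induction on $|D|$, using \cref{prop:two-connected} for 2-connected contact graphs and \cref{cor:global-split} at cut vertices.

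\emph{Base cases.} If $|D|\le3$, every non-center has $k_x\le|D|-1\le2$, so $W=0=(4|D|-12)^+$. If $H$ is 2-connected, \cref{prop:two-connected} gives $W\le4|D|-12\le(4|D|-12)^+$. A connected graph on at least three vertices that is not 2-connected has a cut vertex. So we may assume $|D|\ge3$ and that $H$ has a cut vertex $v$. (The case $|D|\le2$ is covered above.)

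\emph{Induction step.} Let $C_1,\dots,C_\ell$ with $\ell\ge2$ be the components of $H-v$, and form the pieces $(G_i,X_i,m_i)$ of \cref{def:cut-pieces}. By \cref{lem:induced-pieces}, each piece is a planar owner instance with connected contact graph $H[X_i]$ and $2\le|X_i|<|D|$. Hence the induction hypothesis applies and gives $W_i\le(4|X_i|-12)^+$. \Cref{cor:global-split} then yields
\[
 W\le\sum_i(4|X_i|-12)^+ +4(\ell-1).
\]

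\emph{Main obstacle: the positive parts.} The difficulty is that $(4|X_i|-12)^+$ may exceed $4|X_i|-12$ when $|X_i|\le2$. Summing naively gives
\[
 \sum_i(4|X_i|-12)+4(\ell-1)=4(|D|-1+\ell)-12\ell+4\ell-4=4|D|-8-4\ell,
\]
which has slack $4\ell-4\ge4$ compared with the target $4|D|-12$. This slack must absorb the pieces with $|X_i|\le2$, each of which costs up to $12-4|X_i|\le4$.

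To control this, I would use the sharper bound $(4s-12)^+\le 4s-12+4\cdot[s=2]$. This is exact for $s\ge3$ and gives $0\le-4+4$ when $s=2$. It suffices because, by \cref{lem:induced-pieces}, every piece has $|X_i|\ge2$. So let $a$ be the number of pieces with $|X_i|=2$. Then
\[
 W\le4|D|-8-4\ell+4a .
\]

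This bound is too weak when most pieces are small, so I would refine the argument instead of using \cref{cor:global-split} as a black box. A piece with $|X_i|=2$ is a single center $c$ joined to $v$, and there $k_x^{(i)}\le1$ for all $x\in B_v$. For such $i$, the term $(k_x^{(i)}-2)^+$ in \cref{lem:split-charge} is $0$, whereas the lemma's derivation used $k_x^{(i)}-2=-1$. Redoing \cref{lem:split-charge} with this extra information gives, for each $x$, a charge of $2(|I_x|-1)$ minus one for each active small index. Combined with \cref{lem:cross}, this should recover a bound of the form
\[
 W\le\sum_i(4|X_i|-12)^+ +4(\ell-1)-(\text{savings from small pieces}).
\]

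The case where all pieces are small, so that $H$ is a star centered at $v$, needs a separate check. There every $x\in B_v$ has $k_x\le\ell$, and the bound on $\sum_{x\in B_v}(k_x-2)^+$ should follow from the bipartite planar count of \cref{lem:cross}. In that count, $\sum_{x\in Z}(|I_x|-1)\le2(\ell-1)$ translates into
\[
 \sum_{x\in Z}(k_x-2)\le2\ell-2-|Z|\le4(\ell+1)-12
\]
once $\ell\ge3$ or $|Z|\ge1$, and $W=0$ otherwise. Verifying that these refined inequalities combine uniformly across mixed small and large pieces is the step I expect to require the most care.
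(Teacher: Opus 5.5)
Your plan follows the paper's proof closely: strong induction on $|D|$, \cref{prop:two-connected} in the 2-connected case, a split at a cut vertex $v$ with \cref{cor:global-split}, the identity $(4s-12)^+=4s-12+4[s=2]$ for $s\ge2$, and a separate treatment of the star via \cref{lem:cross}. As written, however, the proposal does not finish. It stops at a step you have misdiagnosed.

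You say that $W\le4|D|-8-4\ell+4a$ is too weak ``when most pieces are small''. You then propose refining \cref{lem:split-charge} with savings from small pieces, and you leave the mixed case unverified. In fact this bound fails only when \emph{every} piece is small. If some piece has $|X_i|\ge3$, then $a\le\ell-1$, and your own inequality gives $W\le4|D|-8-4\ell+4(\ell-1)=4|D|-12$ at once. The paper closes this case in exactly this way. The refinement is therefore unnecessary, and it is not a reliable tool in any case. A two-center piece need not be active for any $x\in Z$, because its center may meet $B_v$ only through $v$ or through vertices with $|I_x|\le1$. The ``savings'' can then be zero.

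The only remaining case is $a=\ell$. Then $H$ is a star centered at $v$ and $|D|=\ell+1$. Your last paragraph handles it as the paper does: non-centers outside $B_v$ have $k_x\le1$, and $k_x=|I_x|$ on $B_v\setminus\{v\}$, so $W=\sum_{x\in Z}(k_x-2)$. Two bookkeeping fixes are needed here. First, after the base cases you should assume $|D|\ge4$, not $|D|\ge3$; this is what gives $\ell=|D|-1\ge3$ in the star case. Second, your side condition ``$\ell\ge3$ or $|Z|\ge1$'' is wrong as stated. For $\ell=2$ and $|Z|=1$ it would require $2\ell-2-|Z|=1\le0=4(\ell+1)-12$, which is false. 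With $\ell\ge3$, however, $2\ell-2-|Z|\le2\ell-2\le4\ell-8$, and the proof is complete.
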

\begin{proof}
We use strong induction on $|D|$. Each non-center can have contacts
with at most $|D|-1$ foreign blocks, since its own block is excluded.
Thus, if $|D|\le3$, every weight $(k_x-2)^+$ is zero, so $W=0$.
Now let $|D|\ge4$ and assume that the statement holds for every owner
instance with a connected contact graph and fewer centers.
If $H$ has no cut vertex, it is 2-connected and
\cref{prop:two-connected} applies.

Otherwise, $H$ is connected but not 2-connected.
Choose a cut vertex $v$ and form the pieces from
\cref{def:cut-pieces}. By \cref{lem:induced-pieces}, their contact
graphs are connected and $|X_i|<|D|$, so induction gives
$W_i\le(4|X_i|-12)^+$ for each piece.
Let $\ell_2$ be the number of pieces with $|X_i|=2$.
Every piece has at least two centers, so removing the positive parts
adds $4$ exactly for the two-center pieces:
\[
 \sum_iW_i\le\sum_i(4|X_i|-12)^+
 =\sum_i(4|X_i|-12)+4\ell_2.
\]
If at least one piece has three or more centers, then
$\ell_2\le\ell-1$. Using \cref{cor:global-split} and
$\sum_i|X_i|=|D|-1+\ell$ from \cref{lem:induced-pieces}, we obtain
\[
 W\le\sum_i(4|X_i|-12)+4\ell_2+4(\ell-1)
 =4|D|-8-4\ell+4\ell_2\le4|D|-12.
\]

We now consider the case $\ell_2=\ell$. Each component of $H-v$ is then a
single vertex, so $H$ is a star centered at $v$, with $|D|=\ell+1$.
Non-centers outside $B_v$ have at most one foreign contact and weight
zero. For $x\in B_v\setminus\{v\}$, each active piece contains just
one foreign block, so $k_x=|I_x|$. Thus
\[
 W=\sum_{x\in B_v\setminus\{v\}}(|I_x|-2)^+
 \le\sum_{x\in Z}(|I_x|-1)
 \le2\ell-2\le4(\ell+1)-12=4|D|-12.
\]
For $x\notin Z$, the term $(|I_x|-2)^+$ is zero. For $x\in Z$,
we have $|I_x|\ge2$, so $(|I_x|-2)^+=|I_x|-2\le|I_x|-1$.
This gives the first inequality. The second follows from \cref{lem:cross}.
The third holds because $\ell=|D|-1\ge3$.

\end{proof}

\subsection{The general structural theorem}

We finish the proof by applying \cref{prop:connected} to the connected
components of $H$. We then use the weight of each vertex with at least
six foreign contacts to obtain the bound needed for the algorithm.

\begin{restatedstructure}
For every planar graph $G$, dominating set $D$, and owner assignment $m$,
\[
 W=\sum_{x\in V(G)\setminus D}(k_x-2)^+\le(4|D|-12)^+.
\]
If $W>0$, this gives $W\le4|D|-12$.
\end{restatedstructure}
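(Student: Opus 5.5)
The plan is to reduce the restated \cref{thm:structure} to \cref{prop:connected} by summing over the connected components of the contact graph $H$. Let $D_1,\ldots,D_r$ be the vertex sets of the components of $H$. For each $j$, let $G_j=G\bigl[\bigcup_{c\in D_j}B_c\bigr]$, with the restricted owner assignment. Each $G_j$ is an induced subgraph of $G$, so it is planar. It keeps every owner edge, so $D_j$ dominates it, and its contact graph is exactly the component $H[D_j]$, which is connected.

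The key observation is that weights do not change under this restriction. For a non-center $x$ with $m(x)\in D_j$, every foreign contact of $x$ has its center in $N_H(m(x))\subseteq D_j$, so $k_x$ is the same in $G_j$ as in $G$. The blocks partition $V(G)$, so each non-center lies in exactly one $G_j$. Hence $W=\sum_j W_j$, and \cref{prop:connected} gives $W\le\sum_j(4|D_j|-12)^+$.

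It then remains to show that $\sum_j(4|D_j|-12)^+\le(4|D|-12)^+$. Since $z\mapsto(4z-12)^+$ is superadditive on nonnegative integers, I would argue directly as follows. Only components with $|D_j|\ge4$ contribute. If there are none, the sum is $0$. Otherwise the sum is at most $4\sum_{j\in J}|D_j|-12|J|$, where $J$ is the set of contributing indices and $|J|\ge1$. This is at most $4|D|-12$, because $\sum_{j\in J}|D_j|\le|D|$ and $12|J|\ge12$. The final sentence of the statement is then immediate: if $W>0$, then $(4|D|-12)^+>0$, so it equals $4|D|-12$.

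I expect no real obstacle, since the substantive work lies in \cref{prop:connected}. The points to check carefully are the edge cases. If $H$ has an isolated vertex $d$, the component is $\{d\}$ and $W_j=0$, because every non-center in $B_d$ has $k_x=0$; this needs no appeal to \cref{prop:connected}, which covers $|D|\le3$ anyway. We must also confirm that $\Cout(x)\subseteq N_H(m(x))$ holds, which the paper notes after \cref{def:contact}.
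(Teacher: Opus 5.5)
Your proposal is correct and takes essentially the paper's route: restrict to the blocks of each component of $H$, note that contact counts and weights are unchanged, apply \cref{prop:connected} to each piece, and absorb the $-12$ using at least one contributing component. The only cosmetic difference is in which components you drop. The paper discards components of zero weight, noting that positive weight forces at least four centers. You discard components with $(4|D_j|-12)^+=0$, which works equally well.
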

\begin{proof}
For each connected component of $H$, take the subgraph of $G$ formed by
all blocks whose centers lie in that component. There are no edges of $G$
between these subgraphs, so every vertex keeps the same contact count
and weight. Components with weight zero do not contribute to $W$.
Let $X_1,\ldots,X_r$ be the sets of centers in the components with
positive weight. Each such component has at least four centers, since positive weight
requires three foreign blocks as well as the vertex's own block.
By \cref{prop:connected}, its weight is at most $4|X_i|-12$.
If $W>0$, then $r\ge1$, so
\[
 W\le\sum_{i=1}^r(4|X_i|-12)\le4|D|-12r\le4|D|-12.
\]
If $W=0$, the bound $W\le(4|D|-12)^+$ holds directly.
\end{proof}

\begin{corollary}
\label{cor:threshold}
Let \mbox{$A=\{x\notin D:k_x\ge6\}$} be the set of non-centers with
contacts in at least six foreign blocks. Then
\[
 |A|\le(|D|-3)^+.
\]
\end{corollary}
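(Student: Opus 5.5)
The plan is to derive the corollary directly from \cref{thm:structure} by comparing each member of $A$ with its weight. Every $x\in A$ satisfies $k_x\ge6$, so its weight satisfies $(k_x-2)^+=k_x-2\ge4$. All other non-centers have nonnegative weight. Summing over all non-centers therefore gives
\[
 4|A|\le\sum_{x\in A}(k_x-2)^+\le W.
\]

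We then split on whether $A$ is empty. If $A=\varnothing$, the bound $|A|\le(|D|-3)^+$ is immediate, because the right-hand side is nonnegative. If $A\neq\varnothing$, then $W\ge4>0$, and the second sentence of \cref{thm:structure} gives $W\le4|D|-12$. Combining this with the display yields $4|A|\le4|D|-12$, that is, $|A|\le|D|-3=(|D|-3)^+$. The last equality holds because $|D|-3\ge|A|\ge1>0$.

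One small consistency check is worth making. A vertex with $k_x\ge6$ needs at least seven blocks, counting its own, so $A\neq\varnothing$ forces $|D|\ge7$. This agrees with the conclusion but is not needed for the argument.

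There is no real obstacle, since all the work is in \cref{thm:structure}. The only care needed is in handling the positive part. In particular, we must not apply $W\le4|D|-12$ when $W=0$ and $|D|<3$, where the right-hand side would be negative. The case split on $A$ takes care of this.
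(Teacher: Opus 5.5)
Your proof is correct and follows the paper's argument: each vertex of $A$ has weight at least $4$, so $4|A|\le W$, and \cref{thm:structure} finishes the proof. The paper skips your case split by writing $4|A|\le W\le(4|D|-12)^+=4(|D|-3)^+$ and dividing by four, but the two routes are equivalent.
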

\begin{proof}
Each vertex of $A$ has weight at least $4$, and all other weights
are nonnegative. Hence $4|A|\le W\le(4|D|-12)^+$ by
\cref{thm:structure}. Divide by four.
\end{proof}

The bound for $A$ is the one used in \cref{sec:algorithm}.
There, we choose $D$ to be a minimum dominating set and use
$|A|\le(|D|-3)^+$. This choice is made only in the proof.
The algorithm does not need to know $D$.

\section{The constant-round domination algorithm}
\label{sec:algorithm}

Below we describe the three-phase algorithm of Heydt et al.~\cite{HeydtEtAl2025}
with our parameters. Their constant-round implementation of the first
two phases also applies to these parameters. We work in the deterministic
$\LOCAL$ model with unique vertex identifiers.

The algorithm requires neither a plane embedding nor a dominating set
as input. For each fixed $\eps>0$, its number of rounds is independent
of the size of the graph.

\subsection{The three phases}

In the first phase, we select into $D_1$ every vertex whose open
neighborhood cannot be dominated by at most six other vertices.
All vertices in $N[D_1]$ are then dominated. The set $R_1$ contains
those still undominated, and $N_1(v)$ contains the neighbors of $v$ in $R_1$.

In the second phase, two distinct vertices are \textbf{partners} if they
have at least nineteen common neighbors in $R_1$. We select into $D_2$
every vertex that has a partner, even if it is already dominated.
The set $R$ contains the vertices still undominated by $S=D_1\cup D_2$.

In the third phase, we use the same LP-based routine as
Heydt et al.~\cite{HeydtEtAl2025}, with the parameters given in
\cref{sec:residual}, to find a set $D_3$ dominating $R$.
The routine first reduces
the remaining instance and selects vertices of high degree. It then
solves a covering linear program approximately and turns the fractional
solution into selected vertices. The output is $S\cup D_3$.

\begin{enumerate}[{Phase} 1.]
\item \(\begin{aligned}[t]
 D_1&=\{v\in V:\nexists Z\subseteq V\setminus\{v\}\ (|Z|\le6\land N(v)\subseteq N[Z])\},\\
 R_1&=V\setminus N[D_1],\qquad N_1(v)=N(v)\cap R_1.
\end{aligned}\)
\item \(\begin{aligned}[t]
 P_v&=\{z\in V\setminus\{v\}:|N_1(v)\cap N_1(z)|\ge19\},\quad
 D_2=\{v\in V:P_v\ne\varnothing\},\\
 S&=D_1\cup D_2,\qquad R=V\setminus N[S].
\end{aligned}\)
\item Use the LP-based routine to find a set $D_3$ dominating $R$.
Output $S\cup D_3$.
\end{enumerate}

The following lemma is based on Heydt et al.~\cite[Lemmas 8.1 and 8.2]{HeydtEtAl2025},
with only the cover size and partner threshold changed to six and nineteen.
The full proof is given in Appendix~\ref{app:partners-proof}.

\begin{lemma}
\label{lem:partners}
If $Z\subseteq V\setminus\{v\}$, $|Z|\le6$, and $N(v)\subseteq N[Z]$,
then $P_v\subseteq Z$.
Moreover $D_1\cap D_2=\varnothing$, and every vertex of $D_2$ has at most
six partners.
\end{lemma}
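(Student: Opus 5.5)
The plan is a pigeonhole argument that ends in a forbidden $K_{3,3}$. It uses only planarity of $G$ and the definitions of Phases~1 and~2. The constant $19=6+2\cdot 6+1$ is chosen so that the pigeonhole step works.

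\emph{First claim.} Let $Z\subseteq V\setminus\{v\}$ with $|Z|\le6$ and $N(v)\subseteq N[Z]$, and suppose, for contradiction, that some $z\in P_v$ has $z\notin Z$. Let $Q=N_1(v)\cap N_1(z)$, so $|Q|\ge19$. Every $q\in Q$ lies in $N(v)\subseteq N[Z]$, so either $q\in Z$ or $q$ has a neighbor in $Z$. At most $|Z|\le6$ vertices of $Q$ lie in $Z$. Hence at least $13$ vertices of $Q$ are adjacent to some vertex of $Z$. Since $13>2\cdot6$, pigeonhole gives a $z'\in Z$ adjacent to three distinct vertices $a,b,c\in Q$.

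We then check that $\{v,z,z'\}$ and $\{a,b,c\}$ are disjoint triples of distinct vertices:
\begin{itemize}[$\circ$]
\item $v\ne z$, since partners are distinct by definition.
\item $z'\ne v$, since $Z\subseteq V\setminus\{v\}$.
\item $z'\ne z$, since $z\notin Z$.
\item Each of $a,b,c$ is adjacent to $v$, $z$ and $z'$, so none of them equals $v$, $z$ or $z'$, because the graph is simple.
\end{itemize}
So $G$ contains $K_{3,3}$ as a subgraph, which contradicts planarity. Therefore $P_v\subseteq Z$.

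\emph{Disjointness.} If $v\in D_1$, then $N(v)\subseteq N[D_1]$, so $N(v)\cap R_1=\varnothing$. Thus $N_1(v)=\varnothing$, no $z$ can have $19$ common neighbors with $v$ in $R_1$, and $P_v=\varnothing$. Hence $v\notin D_2$, which gives $D_1\cap D_2=\varnothing$.

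\emph{At most six partners.} If $v\in D_2$, then $v\notin D_1$ by disjointness. By the definition of $D_1$, there is a set $Z\subseteq V\setminus\{v\}$ with $|Z|\le6$ and $N(v)\subseteq N[Z]$. The first claim gives $P_v\subseteq Z$, so $|P_v|\le6$.

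I expect no real obstacle here. The only delicate point is the pigeonhole step: common neighbors of $v$ and $z$ that themselves lie in $Z$ must be discarded before counting neighbors of $Z$. That is exactly why the threshold is $19$ rather than $13$.
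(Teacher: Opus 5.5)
Your proposal is correct and is essentially the paper's argument: pigeonhole to find one $c\in Z$ with three common open neighbors of $v$ and $z$, then a $K_{3,3}$ contradiction, and the same reasoning for disjointness and the bound $|P_v|\le6$. The only cosmetic difference is the order of the two steps. The paper first pigeonholes the $19$ vertices into closed neighborhoods ($19>3\cdot6$ gives four in some $N[c]$) and then discards $c$ itself. You first discard the at most six vertices lying in $Z$ and then pigeonhole into open neighborhoods ($13>2\cdot6$). Both routes need exactly the same threshold $19$.
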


\subsection{Used in the analysis}

Fix a minimum dominating set $D$ and choose any owner assignment to $D$. These objects are used only in the proof.

\begin{definition}
\label{def:hard-vertices}
A vertex $v\in V$ is \textbf{hard with respect to $D$} if its open
neighborhood cannot be dominated by at most six vertices from
$D\setminus\{v\}$. We write $\hard$ for the set of these vertices:
\[
 \hard\coloneqq\{v\in V:\nexists Z\subseteq D\setminus\{v\}\
 (|Z|\le6\land N(v)\subseteq N[Z])\}.
\]
\end{definition}

This is analogous to the set $\widehat D$ used
in \cite{HeydtSiebertzVigny2022,HeydtEtAl2025}, with threshold
six instead of three for planar graphs.
In the latter work, the general threshold is $2\nabla-1$, and
$\nabla=2$ for planar graphs.

\begin{fact}
\label{fact:hard-partners}
We have $D_1\subseteq\hard$. For every $v\notin\hard$, we have
$P_v\subseteq D$.
\end{fact}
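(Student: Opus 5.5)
Both claims follow directly from the definitions, with \cref{lem:partners} doing the work for the second one.

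For the inclusion $D_1\subseteq\hard$, I argue by contrapositive. Suppose $v\notin\hard$. By \cref{def:hard-vertices} there is a set $Z\subseteq D\setminus\{v\}$ with $|Z|\le6$ and $N(v)\subseteq N[Z]$. Since $D\setminus\{v\}\subseteq V\setminus\{v\}$, the same $Z$ witnesses the existential condition in the definition of $D_1$. Hence $v\notin D_1$. The only point to check is that the two definitions use the same size bound $6$ and the same excluded vertex $v$, and they do.

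For the second claim, take $v\notin\hard$ and again choose a witness $Z\subseteq D\setminus\{v\}$ with $|Z|\le6$ and $N(v)\subseteq N[Z]$. This $Z$ satisfies the hypotheses of \cref{lem:partners}: $Z\subseteq V\setminus\{v\}$, $|Z|\le6$, and $N(v)\subseteq N[Z]$. The lemma gives $P_v\subseteq Z$, and $Z\subseteq D$, so $P_v\subseteq D$.

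There is no real obstacle. All the content sits in \cref{lem:partners}, which is already available, and this fact only specializes the cover $Z$ there to one drawn from $D$.
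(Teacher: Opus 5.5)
Your proof is correct and follows the paper's argument. The inclusion $D_1\subseteq\hard$ is immediate because $D\setminus\{v\}\subseteq V\setminus\{v\}$. For $v\notin\hard$, applying \cref{lem:partners} to a witness $Z\subseteq D\setminus\{v\}$ gives $P_v\subseteq Z\subseteq D$, which is exactly what the paper does.
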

\begin{proof}
The first inclusion follows from the definitions of $D_1$ and $\hard$.
If $v\notin\hard$, there is a set $Z\subseteq D\setminus\{v\}$ with
$|Z|\le6$ and $N(v)\subseteq N[Z]$.
By \cref{lem:partners}, $P_v\subseteq Z\subseteq D$.
\end{proof}

We now bound $|\hard\setminus D|$ using \cref{cor:threshold}.

\begin{lemma}
\label{lem:hard}
$\hard\setminus D\subseteq A$. Thus
$|\hard\setminus D|\le(|D|-3)^+$.
If this set is nonempty, then $|D|\ge7$.
\end{lemma}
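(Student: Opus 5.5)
We prove the inclusion $\hard\setminus D\subseteq A$ by contraposition, using the owner assignment $m$ fixed above. Take $v\notin D$ with $k_v\le5$; we show $v\notin\hard$. Set
\[
 Z=\{m(v)\}\cup\Cout(v)\subseteq D.
\]
Then $|Z|\le1+k_v\le6$. Since $v\notin D$, we have $v\notin Z$, so $Z\subseteq D\setminus\{v\}$. It remains to check $N(v)\subseteq N[Z]$.

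Let $u\in N(v)$. Its owner $c=m(u)$ is a center with $u\in B_c$, and $u\in N[c]$ because either $u=c$ or $uc$ is an owner edge. There are two cases.
\begin{enumerate}[(i)]
\item If $c=m(v)$, then $c\in Z$ directly.
\item If $c\ne m(v)$, then $B_c$ is a foreign block for $v$ containing its neighbor $u$. By \cref{def:foreign-contacts}, this means $c\in\Cout(v)\subseteq Z$.
\end{enumerate}
In both cases $u\in N[Z]$. This witness $Z$ shows $v\notin\hard$, which proves the inclusion.

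The counting statement follows at once. The inclusion gives $|\hard\setminus D|\le|A|$, and \cref{cor:threshold} gives $|A|\le(|D|-3)^+$.

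For the last claim, suppose $\hard\setminus D$ is nonempty and pick $x$ in it. By the inclusion, $x\in A$, so $k_x\ge6$. Thus $x$ sees six distinct foreign blocks, whose centers all differ from its own owner $m(x)$. Hence $|D|\ge7$.

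The main obstacle is identifying the right witness set $Z$. It must consist of owners of the neighbors of $v$ rather than of neighbors of $v$ that lie in $D$. The owner star is what guarantees that each such owner dominates the corresponding neighbor. The rest is bookkeeping.
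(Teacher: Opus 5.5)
Your proof is correct and is essentially the paper's argument. Your witness $Z=\{m(v)\}\cup\Cout(v)$ is exactly the paper's set $M_x=\{m(y):y\in N(x)\}$ of owners of neighbors, and you run the paper's step ``$|M_x|\ge7$, hence $k_x=|M_x|-1\ge6$'' in contrapositive form; the bound via \cref{cor:threshold} and the deduction $|D|\ge7$ also match the paper.
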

\begin{proof}
For $x\notin D$ let $M_x=\{m(y):y\in N(x)\}$.
Thus $M_x$ is the set of centers assigned to the neighbors of $x$,
or equivalently, the centers of stars containing at least one neighbor of $x$.
Every neighbor $y$ belongs to $N[m(y)]$, including when $y$ is a center,
so $N(x)\subseteq N[M_x]$.
For $x\in\hard\setminus D$, we therefore have $|M_x|\ge7$.
The owner $m(x)$ belongs to $M_x$ because $m(x)\in N(x)$ and
$m(m(x))=m(x)$. The other centers in $M_x$ correspond exactly to the foreign blocks
containing a neighbor of $x$.
Hence $k_x=|M_x|-1\ge6$ and $|D|\ge7$.
Thus $\hard\setminus D\subseteq A$. By \cref{cor:threshold}, we obtain
$|\hard\setminus D|\le|A|\le(|D|-3)^+$.
\end{proof}

We partition $D_2$ into two sets. The set $E_1$ contains the vertices
of $D_2\cap D$ and all their partners. The set $E_2$ contains the
remaining vertices of $D_2$. Thus
\begin{align*}
 E_1&=\bigcup_{v\in D_2\cap D}(\{v\}\cup P_v),\\
 E_2&=D_2\setminus E_1.
\end{align*}
These sets are used only in the analysis.

\begin{lemma}
\label{lem:partner-partition}
We have $D_2=E_1\cup E_2$, $E_1\cap E_2=\varnothing$, and
$E_2\subseteq\hard\setminus D$.
\end{lemma}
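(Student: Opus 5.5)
The first two claims follow directly from the definitions; the real content is the inclusion $E_2\subseteq\hard\setminus D$. Since $E_2=D_2\setminus E_1$ by definition, every vertex of $D_2$ lies in exactly one of $E_1$ or $E_2$ once we know $E_1\subseteq D_2$. So I first check that inclusion. Each $v\in D_2\cap D$ lies in $D_2$. A partner $z\in P_v$ satisfies $|N_1(v)\cap N_1(z)|\ge19$. The partner relation is symmetric, because the defining condition is symmetric in $v$ and $z$, so $v\in P_z$. Hence $P_z\ne\varnothing$ and $z\in D_2$. This gives $E_1\subseteq D_2$, hence $D_2=E_1\cup E_2$, and disjointness is immediate from $E_2=D_2\setminus E_1$.

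For the inclusion, take $w\in E_2$. Then $w\in D_2\setminus E_1$, and I would show two things.

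\emph{First, $w\notin D$.} If $w\in D$, then $w\in D_2\cap D$, so $w\in\{w\}\cup P_w\subseteq E_1$. This is a contradiction.

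\emph{Second, $w\in\hard$.} Suppose $w\notin\hard$. By \cref{fact:hard-partners}, $P_w\subseteq D$. Since $w\in D_2$, the set $P_w$ is nonempty, so pick $z\in P_w$; then $z\in D$. By symmetry of the partner relation, $w\in P_z$. Also $z\in D_2$, because it has the partner $w$. Thus $z\in D_2\cap D$, and therefore $w\in P_z\subseteq E_1$. This contradicts $w\in E_2$. Hence $w\in\hard\setminus D$.

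The only step needing care is the symmetry of partners, which holds because $|N_1(v)\cap N_1(z)|$ is symmetric in $v,z$ and $v\ne z$. There is no real obstacle. \Cref{lem:partners} is used only through \cref{fact:hard-partners}.
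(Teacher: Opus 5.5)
Your proof is correct and follows essentially the same route as the paper. Symmetry of the partner relation gives $E_1\subseteq D_2$, the inclusion $D_2\cap D\subseteq E_1$ gives $E_2\cap D=\varnothing$, and for $w\in E_2\setminus\hard$ the partner $z\in D$ from \cref{fact:hard-partners} forces $w\in P_z\subseteq E_1$ (you also make explicit that $P_w\ne\varnothing$ because $w\in D_2$, which the paper leaves implicit).
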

\begin{proof}
Partnership is symmetric, so every partner of a vertex in $D_2$ is also
in $D_2$. Hence $E_1\subseteq D_2$, and $E_1,E_2$ partition $D_2$.
Since $D_2\cap D\subseteq E_1$, the set $E_2$ is disjoint from $D$.
If $v\in E_2\setminus\hard$, \cref{fact:hard-partners} gives a
partner $z\in D$. Symmetry gives $z\in D_2$ and $v\in P_z$, so
$v\in E_1$, a contradiction. Thus $E_2\subseteq\hard\setminus D$.
\end{proof}

The next lemma bounds the number of vertices selected in the first two phases.

\begin{lemma}
\label{lem:phase-count}
We have
\[
 |S|\le (|D|-3)^+ + 7|S\cap D|.
\]
\end{lemma}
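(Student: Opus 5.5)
We bound $|S|$ by splitting $S=D_1\cup D_2$ into a part disjoint from $D$ and a part charged to $S\cap D$. By \cref{lem:partners}, $D_1\cap D_2=\varnothing$, and by \cref{lem:partner-partition}, $D_2=E_1\sqcup E_2$. Hence
\[
 |S|=|D_1|+|E_1|+|E_2|.
\]
We bound the three terms separately.

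The vertices not in $D$ will be charged to $\hard\setminus D$. By \cref{fact:hard-partners}, $D_1\subseteq\hard$, so $D_1\setminus D\subseteq\hard\setminus D$. By \cref{lem:partner-partition}, $E_2\subseteq\hard\setminus D$. The sets $D_1$ and $E_2\subseteq D_2$ are disjoint, so
\[
 |D_1\setminus D|+|E_2|\le|\hard\setminus D|\le(|D|-3)^+
\]
by \cref{lem:hard}.

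The remaining vertices of $S$ lie in $(D_1\cap D)\cup E_1$, and we charge them to $S\cap D$. Each $v\in D_2\cap D$ contributes $\{v\}\cup P_v$ to $E_1$, which has at most $7$ vertices, since $v$ has at most six partners by \cref{lem:partners}. Thus
\[
 |E_1|\le7|D_2\cap D|.
\]
Trivially $|D_1\cap D|\le7|D_1\cap D|$. Since $D_1$ and $D_2$ are disjoint, adding these gives
\[
 |D_1\cap D|+|E_1|\le7|S\cap D|.
\]
Note that $E_1$ may contain non-$D$ vertices (partners of $D$-vertices), and these are covered by this charge. So $D_1\setminus D$, $E_2$, $D_1\cap D$ and $E_1$ partition $S$, and the two bounds combine to give the lemma.

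No step is hard here. The only point that needs care is keeping the four pieces $D_1\setminus D$, $D_1\cap D$, $E_1$, $E_2$ pairwise disjoint. This follows from $D_1\cap D_2=\varnothing$ and the partition of $D_2$, and it ensures no vertex is counted in both the $(|D|-3)^+$ term and the $7|S\cap D|$ term.
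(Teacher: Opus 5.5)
Your proof is correct and follows essentially the same route as the paper. Both place $D_1\setminus D$ and $E_2$ inside $\hard\setminus D$, bounded by \cref{lem:hard}, and charge $E_1$ at most seven vertices per vertex of $D_2\cap D$ via \cref{lem:partners}. The paper instead covers $S\setminus D\subseteq(\hard\setminus D)\cup(E_1\setminus D)$ and adds $|S\cap D|$, giving the slightly sharper intermediate bound $(|D|-3)^+ + 6|D_2\cap D| + |S\cap D|$, where you use an exact four-part partition of $S$.
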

\begin{proof}
By \cref{fact:hard-partners,lem:partner-partition},
$S\setminus D\subseteq(\hard\setminus D)\cup(E_1\setminus D)$.
The first set has size at most $(|D|-3)^+$ by \cref{lem:hard}.
Every vertex of $E_1\setminus D$ is a partner of a vertex in $D_2\cap D$,
and each such vertex has at most six partners by \cref{lem:partners}.
Therefore
\[
 |S|\le (|D|-3)^+ +6|D_2\cap D|+|S\cap D|
 \le (|D|-3)^+ +7|S\cap D|.\qedhere
\]
\end{proof}

\section{The third phase and the approximation guarantee}
\label{sec:residual}

We bound the residual degree and optimum to apply the third phase of
Heydt et al.~\cite{HeydtEtAl2025}, then bound the full output.

\subsection{Bounds used in the third phase}

The following lemma is based on Heydt et al.~\cite[Lemma 8.7]{HeydtEtAl2025},
with cover size six and partner threshold nineteen giving the bound $6\cdot19=114$.
The full proof is given in Appendix~\ref{app:residual-degree-proof}.

\begin{lemma}
\label{lem:residual-degree}
The set $R=V\setminus N[S]$ consists of the vertices still undominated
after the first two phases, where $S=D_1\cup D_2$.
For every vertex $v\in V(G)$,
\[
 |N(v)\cap R|\le114.
\]
\end{lemma}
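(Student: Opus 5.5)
We need to show every vertex $v$ has at most $114$ neighbors in $R$. The plan is to cover $N(v)\cap R$ by a few sets and bound the $R$-neighbors each of them absorbs.

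\textbf{Step 1: cover $N(v)$ by at most six other vertices.} Since $R\subseteq R_1=V\setminus N[D_1]$, we may assume $N(v)\cap R\neq\varnothing$. Then $v\notin D_1$, because $v\in D_1$ would give $N(v)\subseteq N[S]$ and so $N(v)\cap R=\varnothing$. By the definition of $D_1$, there is a set $Z\subseteq V\setminus\{v\}$ with $|Z|\le6$ and $N(v)\subseteq N[Z]$. Hence
\[
 N(v)\cap R\subseteq\bigcup_{z\in Z}N[z]\cap N(v)\cap R .
\]

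\textbf{Step 2: bound the contribution of each $z\in Z$.} Fix $z\in Z$. If $z\in N(v)\cap R$, then $z$ is undominated by $S$, so $z\notin S$. Each $y\in N(z)\cap N(v)\cap R$ lies in $R\subseteq R_1$ and is therefore a common neighbor of $v$ and $z$ in $R_1$.

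Suppose $v$ and $z$ have at least $19$ common neighbors in $R_1$. Then $z\in P_v$, so $v\in D_2\subseteq S$. This forces $N(v)\subseteq N[S]$ and $N(v)\cap R=\varnothing$, contradicting our assumption. Hence $|N_1(v)\cap N_1(z)|\le18$.

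The closed neighborhood adds at most the vertex $z$ itself. So $z$ covers at most $19$ vertices of $N(v)\cap R$, and summing over $|Z|\le6$ gives the bound $6\cdot19=114$.

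\textbf{Step 3: sharpen the count if needed.} The appendix version reportedly follows Heydt et al.'s Lemma~8.7 and gets exactly $114$, so this crude count already matches the claim. I would still check the degenerate cases. If $z\in N(v)$ is itself in $R$, it is counted once as itself, which the $18+1$ split already handles. The case $v\in Z$ is excluded by definition. The case $v\in D_2$ is already covered by the emptiness argument.

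\textbf{Main obstacle.} The key point is the contrapositive in Step 2: having $19$ common $R_1$-neighbors puts $v$ into $D_2$, which empties $N(v)\cap R$. This needs $v$ to be dominated by $S$ via itself, which holds since $N[S]$ contains $S$ and its neighbors. I expect no further difficulty.
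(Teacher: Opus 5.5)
Your proof is correct and follows essentially the paper's argument. You bound each $z\in Z$ by at most $18$ common $R_1$-neighbors plus $z$ itself and sum to $6\cdot19=114$, while the paper runs the same pigeonhole count as a contradiction starting from $115$ neighbors (the closing remark should say that $v\in S$ dominates its neighbors, i.e.\ $N(v)\subseteq N[S]$, as you correctly state in Step~2).
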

Let $D_R\subseteq V(G)$ be a smallest set that dominates $R$.
Its vertices may lie anywhere in $V(G)$, including outside $R$.
Like $D$, this set is used only in the proof.
The following observation also appears in \cite{HeydtEtAl2025}.

\begin{lemma}
\label{lem:residual-optimum}
The set $D\setminus S$ dominates $R$, and
\[
 |D_R|\le|D\setminus S|=|D|-|S\cap D|.
\]
\end{lemma}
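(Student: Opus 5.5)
Since $D$ dominates $G$, every vertex of $R$ has a neighbor in $D$. I will show that this neighbor can never be in $S$. Then $D\setminus S$ dominates $R$, and the size bound follows at once.

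Take $r\in R$. The definition $R=V\setminus N[S]$ says that $r\notin S$ and that $r$ has no neighbor in $S$. Thus $N[r]\cap S=\varnothing$. Because $D$ is a dominating set of $G$, the set $N[r]\cap D$ is nonempty; fix a vertex $d\in N[r]\cap D$. We have $d\in N[r]$ and $N[r]$ misses $S$, so $d\notin S$. Therefore $d\in D\setminus S$ and $r\in N[d]$. As $r\in R$ was arbitrary, $R\subseteq N[D\setminus S]$. This is exactly the statement that $D\setminus S$ dominates $R$, where we use the same convention as for $D_R$: the dominating vertices may lie anywhere in $V(G)$.

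The set $D_R$ is a smallest set dominating $R$ and $D\setminus S$ is one such set, so $|D_R|\le|D\setminus S|$. The set $D$ is the disjoint union of $D\setminus S$ and $S\cap D$, which gives the identity $|D\setminus S|=|D|-|S\cap D|$.

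No step is a real obstacle here. The only care needed is to use closed neighborhoods throughout. The case $r\in D$ is then covered by $d=r$, since $r\notin S$. The observation that $N[r]\cap S=\varnothing$ follows directly from the definition of $R$ and does not need any of the earlier structural results.
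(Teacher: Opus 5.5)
Your proof is correct and follows the same route as the paper: each vertex of $R$ has a dominator in $D$ but none in $S$, so $D\setminus S$ dominates $R$, and minimality of $D_R$ gives $|D_R|\le|D\setminus S|=|D|-|S\cap D|$. You simply make the closed-neighborhood bookkeeping ($N[r]\cap S=\varnothing$, including the case $r\in D$) more explicit than the paper does.
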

\begin{proof}
Every vertex in $R=V\setminus N[S]$ is dominated by $D$ and has no
dominator in $S$. Thus $D\setminus S$ dominates $R$, and minimality
of $D_R$ gives $|D_R|\le|D\setminus S|=|D|-|S\cap D|$.
\end{proof}

\subsection{The third phase}

\begin{theorem}
\label{thm:residual}
Let $R=V(G)\setminus N[S]$ be the set left undominated after the first
two phases of the algorithm on a planar graph $G$.
For every fixed $\eps>0$, there is a deterministic $\LOCAL$ algorithm
that computes a set $D_3$ dominating $R$ with
\[
 |D_3|\le(7+\eps)|D_R|
 \le(7+\eps)(|D|-|S\cap D|)
\]
in a number of rounds depending only on $\eps$.
\end{theorem}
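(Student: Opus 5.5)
The plan is to treat the third phase as a black box from Heydt et al.~\cite{HeydtEtAl2025} and to verify that its hypotheses hold with our parameters. Their final routine takes as input a planar graph $G$ and a set $R$ of vertices to be dominated. The residual degree must be bounded by a constant $\Delta$: every vertex of $G$ has at most $\Delta$ neighbors in $R$. The routine then returns, in a number of rounds depending only on $\eps$ and $\Delta$, a set dominating $R$ of size at most $(7+\eps)$ times the optimum for dominating $R$. By \cref{lem:residual-degree} we may take $\Delta=114$. This is a fixed constant, so the round bound depends only on $\eps$. The second inequality in the statement is exactly \cref{lem:residual-optimum}. The work therefore lies in recalling why the LP routine gives $7+\eps$ under a constant degree bound.

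I would organize that argument as follows. First, every vertex $u$ with $N[u]\cap R\neq\varnothing$ is a candidate. Vertices of $R$ each have at most $115$ candidates in their closed neighborhoods, since $|N(v)\cap R|\le114$ for every $v$. The covering LP asks to minimize $\sum_u y_u$ subject to $\sum_{u\in N[r]}y_u\ge1$ for every $r\in R$. This LP has constant-size constraints and constant column weight. A $(1+\eps)$-approximate fractional solution can then be computed in $O_\eps(1)$ rounds by the standard local LP approximation for bounded-degree covering LPs (Kuhn--Moscibroda--Wattenhofer style). Second, the fractional solution is rounded deterministically. The reduction step of Heydt et al.\ selects vertices of high residual degree. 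After that, one selects every vertex with $y_u$ above a threshold, and each vertex $r\in R$ that is still undominated picks a dominator of its own. The planarity-based charging of Heydt et al.\ bounds the rounding loss by a factor of $7$ against the LP value, which is at most $|D_R|$. This charging uses the sparsity of planar graphs, namely the edge density below $3$ and the bound on the number of bipartite contacts.

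The main obstacle is confirming that the constant $7$ in their rounding analysis depends only on planarity and on the reduction step. It must not depend on the specific value of the degree bound or on the threshold $3$ they used in Phase~1. If it did, our switch to threshold six and partner threshold nineteen would change the factor. I would check their statement of the third-phase theorem. I expect it to be phrased for an arbitrary set $R$ with bounded residual degree and an arbitrary planar host graph, with the degree bound entering only the round complexity. If that holds, the theorem follows immediately by combining it with \cref{lem:residual-degree,lem:residual-optimum}. If instead their argument uses properties of $R$ specific to their earlier phases, I would rederive those properties from \cref{lem:partners}: for example, the fact that no two vertices share many neighbors in $R$ follows from the partner threshold.
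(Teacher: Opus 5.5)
Your proposal is correct and matches the paper's proof. The paper applies the third phase of Heydt et al.\ (their Section~6, Lemma~6.3) as a black box, with residual-degree bound $114$ from \cref{lem:residual-degree} and density bound three for planar graphs; this gives $7(1+\xi)$ with $\xi=\eps/7$, in a number of rounds depending only on $\xi$ and $114$, and \cref{lem:residual-optimum} then gives the second inequality. There is one slip in your optional sketch of the routine's internals: $|N(v)\cap R|\le114$ bounds how many vertices of $R$ each candidate covers (the column weight of the covering LP), not how many candidates each vertex of $R$ has, but the black-box argument does not rely on this.
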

\begin{proof}
We apply the third phase of
Heydt et al.~\cite[Section 6, Lemma 6.3]{HeydtEtAl2025}.
By \cref{lem:residual-degree}, we can use 114 as the residual-degree
bound in their algorithm.
For planar graphs, the density bound is three, so their result gives
\[
 |D_3|\le7(1+\xi)|D_R|
\]
for every fixed $\xi>0$.
Their algorithm runs in a number of $\LOCAL$ rounds depending only
on $\xi$ and the residual-degree bound.
With $\xi=\eps/7$, we obtain $7(1+\xi)=7+\eps$.
Finally, \cref{lem:residual-optimum} gives
$|D_R|\le|D|-|S\cap D|$, which proves the second inequality.
\end{proof}

\subsection{The approximation guarantee}

We add the bounds from \cref{lem:phase-count,thm:residual} to bound the full output.

\begin{theorem}
\label{thm:algorithm-proof}
The three-phase algorithm returns a dominating set of size at most
$(8+\eps)|D|$.
\end{theorem}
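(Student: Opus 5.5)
The plan is to add the two bounds already established and then check that the output really dominates the graph. For domination, every vertex is either in $N[S]$ or in $R=V\setminus N[S]$. Since $D_3$ dominates $R$ by \cref{thm:residual}, the union $S\cup D_3$ is a dominating set of $G$.

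For the size bound, I would write $|S\cup D_3|\le|S|+|D_3|$ and insert \cref{lem:phase-count} and \cref{thm:residual}:
\[
 |S\cup D_3|\le(|D|-3)^+ + 7|S\cap D| + (7+\eps)\bigl(|D|-|S\cap D|\bigr).
\]
Using $(|D|-3)^+\le|D|$, the right-hand side is at most
\[
 |D|+(7+\eps)|D| - \eps|S\cap D|\le(8+\eps)|D|,
\]
since $\eps|S\cap D|\ge0$. The coefficient of $|S\cap D|$ is $7-(7+\eps)=-\eps\le0$, so the potentially bad term $7|S\cap D|$ is absorbed entirely by the saving in the third phase. This balance is exactly why the partner bound of six, which gives the coefficient $7$, matches the factor $7$ of the LP phase.

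I do not expect a real obstacle here. The only points needing care are that $(|D|-3)^+\le|D|$ holds trivially, including when $|D|\le3$, and that the round complexity depends only on $\eps$. The latter follows because phases one and two use constant radius, as in Heydt et al., and phase three uses rounds depending only on $\eps$ and the constant $114$ by \cref{thm:residual}. Since $D$ is a minimum dominating set, this gives the $(8+\eps)$-approximation.
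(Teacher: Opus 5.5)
Your proposal is correct and follows the paper's proof: you combine \cref{lem:phase-count} with \cref{thm:residual} to get $|S\cup D_3|\le(|D|-3)^+ +(7+\eps)|D|-\eps|S\cap D|\le(8+\eps)|D|$, and you check domination by splitting $V$ into $N[S]$ and $R$. Your added remark on round complexity goes beyond this statement, but it agrees with the paper's discussion of the constant-round implementation.
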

\begin{proof}
By \cref{lem:phase-count,thm:residual}, we have
$|S\cup D_3|\le (|D|-3)^+ +(7+\eps)|D|-\eps|S\cap D|
\le (8+\eps)|D|$.
The output dominates $V(G)\setminus R$ through $S$ and dominates $R$
through $D_3$. The empty graph has empty output.
\end{proof}

Together with the constant-round implementation described above,
\cref{thm:algorithm-proof} proves \cref{thm:main}.

\section{Scope and further questions}
\label{sec:discussion}

The coefficient one in the bound on $|A|$ is sharp, even for a minimum
dominating set (\cref{prop:sharp}, Appendix~\ref{sec:sharpness}).
This does not prove that factor $8+\eps$ is optimal.
Closing the gap to $7$ may require a stronger joint analysis of the
first two phases and the residual instance, or a different selection rule.

\subsection{A consequence for bounded Euler genus}

The transfer theorem of Bonamy et al.~\cite{BonamyEtAl2026} improves
their ratio $34+\eps$ to $25+\eps$ using our planar algorithm.
With accuracy $\eps/3$, their Theorem 4.2 and Corollary 4.3 give
$3(8+\eps/3)+1=25+\eps$.
Our algorithm uses identifiers only for comparisons, and the required
uniformity follows from their results for dominating set.
For Euler genus $1\le g\le5$, the bound
$4\sqrt{3g/2}+14+\eps$ derived there from
Heydt et al.~\cite{HeydtEtAl2025} is smaller.

\begin{corollary}
\label{cor:genus}
For every fixed integer $g\ge0$ and $\eps>0$, minimum dominating set
on graphs of Euler genus at most $g$ admits a deterministic
$(25+\eps)$-approximation in $C(g,\eps)$ rounds of $\LOCAL$.
\end{corollary}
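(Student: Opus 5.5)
The plan is to plug the planar algorithm of \cref{thm:main} into the transfer theorem of Bonamy et al.~\cite{BonamyEtAl2026}, as sketched just before the statement. That theorem, in the form of their Theorem~4.2 and Corollary~4.3, says roughly the following. Suppose a deterministic constant-round $\LOCAL$ algorithm computes an $\alpha$-approximation of minimum dominating set on planar graphs, uses identifiers only through comparisons, and is uniform, meaning it needs no knowledge of $n$. Then for every fixed $g$ there is a deterministic algorithm on graphs of Euler genus at most $g$ with ratio $3\alpha+1$. Its round count depends only on $g$ and on the round count of the planar algorithm.

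First I apply \cref{thm:main} with accuracy $\eps/3$. This gives a planar $(8+\eps/3)$-approximation whose round bound $r(\eps/3)$ depends only on $\eps$. I then check the hypotheses of the transfer theorem on our three-phase algorithm.
\begin{itemize}
\item \textbf{No embedding or graph size.} Phases~1 and~2 are defined purely by neighborhood-domination tests and by counting common neighbors in $R_1$ inside a constant radius, so they need neither an embedding nor $n$.
\item \textbf{Phase~3.} This is the routine of Heydt et al.\ with fixed parameters, namely the residual-degree bound $114$ from \cref{lem:residual-degree} and $\xi=\eps/21$. Its round count depends only on these parameters.
\item \textbf{Identifiers.} They enter only to break ties and for symmetry breaking by comparison inside the implementations we inherit.
\end{itemize}

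Substituting $\alpha=8+\eps/3$ gives $3(8+\eps/3)+1=25+\eps$. The round count is $C(g,\eps)$, a function of $g$ and $r(\eps/3)$ only. The case $g=0$ is already covered directly by \cref{thm:main}, since $8+\eps\le25+\eps$.

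The main obstacle is confirming that the transfer theorem's uniformity and order-invariance hypotheses really hold for the LP-based third phase, which we import as a black box. My first move is to argue as the paper indicates. Bonamy et al.\ state that for dominating set these properties can be obtained generically, either by a Ramsey-type or Naor--Stockmeyer reduction to order-invariant algorithms, or because their transfer only ever invokes the planar algorithm on bounded-radius local pieces. In that case any deterministic constant-round algorithm suffices, and the round bound remains a function of $g$ and $\eps$ alone. If that generic statement is unavailable, the fallback is to verify directly that the Heydt et al.\ routine makes only comparison-based decisions. For example, its approximate LP solution and rounding can be computed by a local procedure whose choices depend only on the isomorphism type of the ordered ball.
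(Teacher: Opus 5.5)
Your proposal is correct and follows the paper's route. You run the planar algorithm with accuracy $\eps/3$ and feed it to Theorem~4.2 and Corollary~4.3 of Bonamy et al., getting $3(8+\eps/3)+1=25+\eps$ in $C(g,\eps)$ rounds, after noting that the algorithm needs neither an embedding nor $n$ and uses identifiers only through comparisons; the paper asserts that comparison-only property directly and takes uniformity from Bonamy et al.'s results for dominating set, so your hedged detour through a generic Ramsey or Naor--Stockmeyer reduction is not needed.
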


\section*{Use of generative AI}
The structural idea was developed by the author. Generative AI assistants
were used to help check constants in the proofs, check references, find
and check examples, edit and organize the English text, and prepare LaTeX code.
All figures were drawn in TikZ with help from generative AI.
The author checked all mathematical proofs and reviewed each figure,
and takes full responsibility for the results and the final manuscript.

\bibliographystyle{plainurl}
\bibliography{references}

\begin{thebibliography}{10}

\bibitem{AmiriEtAl2018Expansion}
Saeed~Akhoondian Amiri, Patrice~Ossona de~Mendez, Roman Rabinovich, and
  Sebastian Siebertz.
\newblock Distributed domination on graph classes of bounded expansion.
\newblock In {\em Proceedings of the 30th ACM Symposium on Parallelism in
  Algorithms and Architectures (SPAA)}, pages 143--151, 2018.
\newblock \href {https://doi.org/10.1145/3210377.3210383}
  {\path{doi:10.1145/3210377.3210383}}.

\bibitem{AmiriSchmidSiebertz2016}
Saeed~Akhoondian Amiri, Stefan Schmid, and Sebastian Siebertz.
\newblock A local constant factor {MDS} approximation for bounded genus graphs.
\newblock In {\em Proceedings of the 2016 ACM Symposium on Principles of
  Distributed Computing}, pages 227--233. ACM, 2016.
\newblock \href {https://doi.org/10.1145/2933057.2933084}
  {\path{doi:10.1145/2933057.2933084}}.

\bibitem{AmiriSchmidSiebertz2019}
Saeed~Akhoondian Amiri, Stefan Schmid, and Sebastian Siebertz.
\newblock Distributed dominating set approximations beyond planar graphs.
\newblock {\em ACM Transactions on Algorithms}, 15(3):39:1--39:18, 2019.
\newblock \href {https://arxiv.org/abs/1705.09617} {\path{arXiv:1705.09617}},
  \href {https://doi.org/10.1145/3326170} {\path{doi:10.1145/3326170}}.

\bibitem{BonamyEtAl2021Outerplanar}
Marthe Bonamy, Linda Cook, Carla Groenland, and Alexandra Wesolek.
\newblock A tight local algorithm for the minimum dominating set problem in
  outerplanar graphs.
\newblock In {\em 35th International Symposium on Distributed Computing (DISC
  2021)}, volume 209 of {\em Leibniz International Proceedings in Informatics
  (LIPIcs)}, pages 13:1--13:18, 2021.
\newblock \href {https://doi.org/10.4230/LIPIcs.DISC.2021.13}
  {\path{doi:10.4230/LIPIcs.DISC.2021.13}}.

\bibitem{BonamyEtAl2026}
Marthe Bonamy, Avinandan Das, Cyril Gavoille, Timoth{\'e} Picavet, Jukka
  Suomela, and Alexandra Wesolek.
\newblock Meta-theorems for cuttable distributed problems.
\newblock In {\em Proceedings of the 45th ACM Symposium on Principles of
  Distributed Computing (PODC)}, pages 382--389, 2026.
\newblock Full version with appendices: arXiv:2605.19157v1.
\newblock URL: \url{https://arxiv.org/abs/2605.19157v1}, \href
  {https://arxiv.org/abs/2605.19157} {\path{arXiv:2605.19157}}, \href
  {https://doi.org/10.1145/3796701.3815958}
  {\path{doi:10.1145/3796701.3815958}}.

\bibitem{CzygrinowHanckowiakWawrzyniak2008}
Andrzej Czygrinow, Micha{\l} Ha{\'n}{\'c}kowiak, and Wojciech Wawrzyniak.
\newblock Fast distributed approximations in planar graphs.
\newblock In Gadi Taubenfeld, editor, {\em Distributed Computing (DISC 2008)},
  volume 5218 of {\em Lecture Notes in Computer Science}, pages 78--92.
  Springer, 2008.
\newblock \href {https://doi.org/10.1007/978-3-540-87779-0_6}
  {\path{doi:10.1007/978-3-540-87779-0_6}}.

\bibitem{CzygrinowEtAl2018}
Andrzej Czygrinow, Micha{\l} Ha{\'n}{\'c}kowiak, Wojciech Wawrzyniak, and
  Marcin Witkowski.
\newblock Distributed approximation algorithms for the minimum dominating set
  in {$K_h$}-minor-free graphs.
\newblock In {\em 29th International Symposium on Algorithms and Computation
  (ISAAC 2018)}, volume 123 of {\em Leibniz International Proceedings in
  Informatics (LIPIcs)}, pages 22:1--22:12, 2018.
\newblock \href {https://doi.org/10.4230/LIPIcs.ISAAC.2018.22}
  {\path{doi:10.4230/LIPIcs.ISAAC.2018.22}}.

\bibitem{CzygrinowEtAl2019Broadcast}
Andrzej Czygrinow, Micha{\l} Ha{\'n}{\'c}kowiak, Wojciech Wawrzyniak, and
  Marcin Witkowski.
\newblock Distributed {CONGEST\_BC} constant approximation of {MDS} in bounded
  genus graphs.
\newblock {\em Theoretical Computer Science}, 757:1--10, 2019.
\newblock \href {https://doi.org/10.1016/j.tcs.2018.07.008}
  {\path{doi:10.1016/j.tcs.2018.07.008}}.

\bibitem{CzygrinowHanckowiakWitkowski2022}
Andrzej Czygrinow, Micha{\l} Han{\'c}kowiak, and Marcin Witkowski.
\newblock Distributed distance domination in graphs with no {$K_{2,t}$}-minor.
\newblock {\em Theoretical Computer Science}, 916:22--30, 2022.
\newblock \href {https://arxiv.org/abs/2203.03229} {\path{arXiv:2203.03229}},
  \href {https://doi.org/10.1016/j.tcs.2022.03.001}
  {\path{doi:10.1016/j.tcs.2022.03.001}}.

\bibitem{DeurerKuhnMaus2019}
Janosch Deurer, Fabian Kuhn, and Yannic Maus.
\newblock Deterministic distributed dominating set approximation in the
  {CONGEST} model.
\newblock In {\em Proceedings of the 2019 ACM Symposium on Principles of
  Distributed Computing (PODC)}, pages 94--103, 2019.
\newblock \href {https://doi.org/10.1145/3293611.3331626}
  {\path{doi:10.1145/3293611.3331626}}.

\bibitem{Diestel2025}
Reinhard Diestel.
\newblock {\em Graph Theory}, volume 173 of {\em Graduate Texts in
  Mathematics}.
\newblock Springer, 6 edition, 2025.
\newblock \href {https://doi.org/10.1007/978-3-662-70107-2}
  {\path{doi:10.1007/978-3-662-70107-2}}.

\bibitem{HeydtEtAl2025}
Ozan Heydt, Simeon Kublenz, Patrice~Ossona de~Mendez, Sebastian Siebertz, and
  Alexandre Vigny.
\newblock Distributed domination on sparse graph classes.
\newblock {\em European Journal of Combinatorics}, 123:103773, 2025.
\newblock Numbered references in this paper follow the full version,
  arXiv:2207.02669v1.
\newblock URL: \url{https://arxiv.org/abs/2207.02669v1}, \href
  {https://arxiv.org/abs/2207.02669} {\path{arXiv:2207.02669}}, \href
  {https://doi.org/10.1016/j.ejc.2023.103773}
  {\path{doi:10.1016/j.ejc.2023.103773}}.

\bibitem{HeydtSiebertzVigny2022}
Ozan Heydt, Sebastian Siebertz, and Alexandre Vigny.
\newblock Local planar domination revisited.
\newblock In {\em Structural Information and Communication Complexity (SIROCCO
  2022)}, volume 13298 of {\em Lecture Notes in Computer Science}, pages
  154--173, 2022.
\newblock \href {https://arxiv.org/abs/2111.14506} {\path{arXiv:2111.14506}},
  \href {https://doi.org/10.1007/978-3-031-09993-9_9}
  {\path{doi:10.1007/978-3-031-09993-9_9}}.

\bibitem{HilkeLenzenSuomela2014}
Miikka Hilke, Christoph Lenzen, and Jukka Suomela.
\newblock Brief announcement: Local approximability of minimum dominating set
  on planar graphs.
\newblock In {\em Proceedings of the 2014 ACM Symposium on Principles of
  Distributed Computing (PODC)}, pages 344--346, 2014.
\newblock \href {https://arxiv.org/abs/1402.2549} {\path{arXiv:1402.2549}},
  \href {https://doi.org/10.1145/2611462.2611504}
  {\path{doi:10.1145/2611462.2611504}}.

\bibitem{KublenzSiebertzVigny2021}
Simeon Kublenz, Sebastian Siebertz, and Alexandre Vigny.
\newblock Constant round distributed domination on graph classes with bounded
  expansion.
\newblock In {\em Structural Information and Communication Complexity (SIROCCO
  2021)}, volume 12810 of {\em Lecture Notes in Computer Science}, pages
  334--351, 2021.
\newblock Corrected full version: arXiv:2012.02701v3.
\newblock URL: \url{https://arxiv.org/abs/2012.02701v3}, \href
  {https://arxiv.org/abs/2012.02701} {\path{arXiv:2012.02701}}, \href
  {https://doi.org/10.1007/978-3-030-79527-6_19}
  {\path{doi:10.1007/978-3-030-79527-6_19}}.

\bibitem{KuhnMoscibrodaWattenhofer2016}
Fabian Kuhn, Thomas Moscibroda, and Roger Wattenhofer.
\newblock Local computation: Lower and upper bounds.
\newblock {\em Journal of the ACM}, 63(2):17:1--17:44, 2016.
\newblock \href {https://doi.org/10.1145/2742012} {\path{doi:10.1145/2742012}}.

\bibitem{LenzenPignoletWattenhofer2013}
Christoph Lenzen, Yvonne-Anne Pignolet, and Roger Wattenhofer.
\newblock Distributed minimum dominating set approximations in restricted
  families of graphs.
\newblock {\em Distributed Computing}, 26(2):119--137, 2013.
\newblock \href {https://doi.org/10.1007/s00446-013-0186-z}
  {\path{doi:10.1007/s00446-013-0186-z}}.

\bibitem{Linial1992}
Nathan Linial.
\newblock Locality in distributed graph algorithms.
\newblock {\em SIAM Journal on Computing}, 21(1):193--201, 1992.
\newblock \href {https://doi.org/10.1137/0221015} {\path{doi:10.1137/0221015}}.

\bibitem{NaorStockmeyer1995}
Moni Naor and Larry~J. Stockmeyer.
\newblock What can be computed locally?
\newblock {\em SIAM Journal on Computing}, 24(6):1259--1277, 1995.
\newblock \href {https://doi.org/10.1137/S0097539793254571}
  {\path{doi:10.1137/S0097539793254571}}.

\bibitem{Suomela2013Survey}
Jukka Suomela.
\newblock Survey of local algorithms.
\newblock {\em ACM Computing Surveys}, 45(2):24:1--24:40, 2013.
\newblock \href {https://doi.org/10.1145/2431211.2431223}
  {\path{doi:10.1145/2431211.2431223}}.

\bibitem{Wawrzyniak2014}
Wojciech Wawrzyniak.
\newblock A strengthened analysis of a local algorithm for the minimum
  dominating set problem in planar graphs.
\newblock {\em Information Processing Letters}, 114(3):94--98, 2014.
\newblock \href {https://doi.org/10.1016/j.ipl.2013.11.008}
  {\path{doi:10.1016/j.ipl.2013.11.008}}.

\bibitem{Wawrzyniak2015Anonymous}
Wojciech Wawrzyniak.
\newblock A local approximation algorithm for minimum dominating set problem in
  anonymous planar networks.
\newblock {\em Distributed Computing}, 28(5):321--331, 2015.
\newblock \href {https://doi.org/10.1007/s00446-015-0247-6}
  {\path{doi:10.1007/s00446-015-0247-6}}.

\bibitem{Wawrzyniak2025TriangleFree}
Wojciech Wawrzyniak.
\newblock A local 6-approximation distributed algorithm for minimum dominating
  set problem in planar triangle-free graphs.
\newblock {\em Algorithms}, 18(5):280, 2025.
\newblock \href {https://doi.org/10.3390/a18050280}
  {\path{doi:10.3390/a18050280}}.

\end{thebibliography}
\appendix

\section{Proof of the partner bound}
\label{app:partners-proof}

We give the argument of Heydt et al.~\cite[Lemmas 8.1 and 8.2]{HeydtEtAl2025}
for cover size six and partner threshold nineteen.

\begin{restatedpartners}
If $Z\subseteq V\setminus\{v\}$, $|Z|\le6$, and $N(v)\subseteq N[Z]$,
then $P_v\subseteq Z$.
Moreover $D_1\cap D_2=\varnothing$, and every vertex of $D_2$ has at most
six partners.
\end{restatedpartners}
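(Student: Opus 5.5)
The plan is to prove the three assertions in order. The first, $P_v\subseteq Z$, uses a pigeonhole argument followed by a $K_{3,3}$ contradiction. The other two follow quickly from it and from the definitions of $D_1$ and $R_1$.

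\emph{Step 1: $P_v\subseteq Z$.} Suppose, for contradiction, that $z\in P_v\setminus Z$. Let $U=N_1(v)\cap N_1(z)$; then $|U|\ge19$, and $U\subseteq N(v)$. Since $N(v)\subseteq N[Z]$, every $u\in U$ lies in $N[z']$ for some $z'\in Z$. There are at most six such sets $N[z']$ and $19=6\cdot3+1$, so by pigeonhole some $z'\in Z$ has $|U\cap N[z']|\ge4$. At most one of these vertices is $z'$ itself, so $z'$ has at least three neighbors $u_1,u_2,u_3\in U$.

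We then check that the six vertices are distinct. First, $z'\ne v$ because $Z\subseteq V\setminus\{v\}$, and $z'\ne z$ because $z\notin Z$; also $v\ne z$ by the definition of $P_v$. Second, each $u_i$ is adjacent to $v$, $z$ and $z'$, so in a simple graph it differs from all three. Thus $\{v,z,z'\}$ and $\{u_1,u_2,u_3\}$ span a $K_{3,3}$ subgraph, which contradicts planarity. Hence $P_v\subseteq Z$.

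\emph{Step 2: $D_1\cap D_2=\varnothing$.} If $v\in D_1$, then every neighbor of $v$ lies in $N[D_1]$, so $N(v)\cap R_1=\varnothing$, that is, $N_1(v)=\varnothing$. Then no vertex $z$ can have $|N_1(v)\cap N_1(z)|\ge19$, so $P_v=\varnothing$ and $v\notin D_2$.

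\emph{Step 3: every vertex of $D_2$ has at most six partners.} If $v\in D_2$, then Step 2 gives $v\notin D_1$. By the definition of $D_1$, there is then a set $Z\subseteq V\setminus\{v\}$ with $|Z|\le6$ and $N(v)\subseteq N[Z]$. Step 1 gives $P_v\subseteq Z$, so $|P_v|\le6$.

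The only step needing care is the pigeonhole count in Step 1. We must discount the possibility $u=z'$, which is why the threshold $19$ yields four covered vertices and hence three genuine common neighbors. We must also confirm the distinctness needed for a true $K_{3,3}$. Everything else is immediate from the definitions.
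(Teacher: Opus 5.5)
Your proof is correct and follows essentially the same route as the paper. You apply pigeonhole over the at most six closed neighborhoods to find $z'\in Z$ covering four of the nineteen common neighbors, discard $z'$ itself to obtain a $K_{3,3}$, and then derive $D_1\cap D_2=\varnothing$ from $N_1(v)=\varnothing$ and the bound $|P_v|\le6$ from the existence of a six-vertex cover when $v\notin D_1$.
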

\begin{proof}
For $z\in P_v$, take nineteen common neighbors of $v,z$ in $R_1$.
Assign each to an element of $Z$ whose closed neighborhood contains it.
Some $c\in Z$ covers at least four, since $19>6\cdot3$.
If $z\notin Z$, the vertices $v,z,c$ are distinct. Remove $c$ itself
from these four vertices if necessary. At least three remain, each
adjacent to $v,z,c$. Common open neighbors cannot equal $v$ or $z$,
and now cannot equal $c$ either. These six vertices contain a
$K_{3,3}$ subgraph, contradicting planarity. Hence $z\in Z$.

If $v\in D_1$, its entire neighborhood is dominated by $D_1$, so
$N_1(v)=\varnothing$ and $P_v=\varnothing$.
Thus $v\in D_2$ implies $v\notin D_1$, so $N(v)$ can be dominated
by at most six other vertices.
The first assertion then gives $|P_v|\le6$.
\end{proof}

\section{Proof of the residual-degree bound}
\label{app:residual-degree-proof}

We give the argument of Heydt et al.~\cite[Lemma 8.7]{HeydtEtAl2025}
for cover size six and partner threshold nineteen.

\begin{restatedresidualdegree}
The set $R=V\setminus N[S]$ consists of the vertices still undominated
after the first two phases, where $S=D_1\cup D_2$.
For every vertex $v\in V(G)$,
\[
 |N(v)\cap R|\le114.
\]
\end{restatedresidualdegree}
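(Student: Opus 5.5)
We prove that every $v\in V(G)$ has at most $114=6\cdot19$ neighbors in $R$. The plan is to cover $N(v)\cap R$ by at most six vertices and show that each of them accounts for at most nineteen of these neighbors. We split on whether $v\in S$.

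If $v\in S$, then every neighbor of $v$ lies in $N[S]$. Hence $N(v)\cap R=\varnothing$ and there is nothing to prove.

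Assume now $v\notin S$. Then $v\notin D_1$, so by the definition of Phase 1 there is a set $Z\subseteq V\setminus\{v\}$ with $|Z|\le6$ and $N(v)\subseteq N[Z]$. Since also $v\notin D_2$, the vertex $v$ has no partner, so $|N_1(v)\cap N_1(z)|\le18$ for every $z\ne v$. Fix $c\in Z$ and count the vertices of $N(v)\cap R$ that lie in $N[c]$.

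\begin{itemize}
\item If $c\in S$, then $N[c]\subseteq N[S]$, so $N[c]$ contains no vertex of $R$ and $c$ covers nothing.
\item If $c\notin S$, the covered vertices are $c$ itself (if $c\in N(v)\cap R$) together with the vertices of $N(v)\cap N(c)\cap R$. Since $R\subseteq R_1$, the latter set lies in $N_1(v)\cap N_1(c)$, which has at most $18$ elements because $c$ is not a partner of $v$. So $c$ covers at most $18+1=19$ vertices.
\end{itemize}

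Summing over the at most six vertices of $Z$ gives $|N(v)\cap R|\le6\cdot19=114$.

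The only delicate point is the extra $+1$ from $c$ lying in its own closed neighborhood. The partner threshold nineteen allows up to eighteen common $R_1$-neighbors, and adding $c$ itself gives exactly nineteen. This matches the stated constant $114$, so no finer argument is needed. We also need $R\subseteq R_1$, which holds because $D_1\subseteq S$ gives $N[D_1]\subseteq N[S]$.
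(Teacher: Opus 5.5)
Your proof is correct and is essentially the paper's argument. The paper assumes $115$ neighbors in $R$, uses pigeonhole to find $c\in Z$ covering twenty of them, discards $c$ itself, and concludes $c\in P_v$, contradicting $v\notin D_2$; that is the contrapositive of your per-vertex bound of $18+1$ summed over $Z$. Your separate case $c\in S$ is harmless but unnecessary, since the bound of $19$ holds for every $c\in Z$.
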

\begin{proof}
Suppose $v$ has at least 115 neighbors in $R$. Then $v\notin S$,
since otherwise those neighbors would be dominated by $S$.
In particular $v\notin D_1$, so choose $Z\subseteq V\setminus\{v\}$
with $|Z|\le6$ and $N(v)\subseteq N[Z]$.
Some $c\in Z$ covers at least twenty of the 115 neighbors, since
$115>6\cdot19$. Removing $c$ itself if necessary leaves at least
nineteen common open neighbors of $v,c$ in $R\subseteq R_1$.
Thus $c\in P_v$ and $v\in D_2\subseteq S$, a contradiction.
\end{proof}

\section{Sharpness and the meaning of the contact bound}
\label{sec:sharpness}

Counting the edges of the contact graph $H$ alone does not give the bound
on $W$. Several vertices of $G$ can have contacts represented by the same
edge of $H$. We now construct examples where the bound on $W$ holds with
equality and the coefficient one in the bound on $|A|$ cannot be reduced.

\begin{proposition}
\label{prop:sharp}
For arbitrarily large $|D|$, there are planar graphs with a dominating
set $D$ and an owner assignment such that
\[
 W=4|D|-12,\qquad |A|=|D|-6.
\]
In these examples, $D$ is a minimum dominating set. Thus the coefficient
one in the bound on $|A|$ cannot be replaced by a smaller constant that
works for all graphs, even when $D$ is minimum.
\end{proposition}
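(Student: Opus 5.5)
The plan is a direct construction: a triangulation whose degree sequence realizes the identity $\sum_v(\deg v-2)=4q-12$, with one pendant center attached to each of its vertices. The pendant structure makes every block a two-vertex star, so each non-center sees one foreign block per neighbor and its weight is exactly its degree in the triangulation minus two. The pendants also force minimality of $D$ for free.

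\textbf{Step 1: choose the triangulation.} Take a simple plane triangulation $T$ on $q$ vertices in which exactly six vertices have degree $4$ and all other vertices have degree $6$. Such degree sequences are consistent with Euler's formula, since $\sum_v(6-\deg v)=12$. The octahedron is the smallest example. For arbitrarily large $q$, I would use the Goldberg--Coxeter subdivisions of the octahedron: subdivide each face into $s^2$ triangles. This keeps the six original vertices at degree $4$ and makes all new vertices have degree $6$. Checking that these subdivisions are simple with exactly this degree sequence is the only step needing care. If it causes trouble, a cylinder-type family of triangulations with two square caps would serve instead.

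\textbf{Step 2: build $G$, $D$ and $m$.} Let $G$ consist of $T$, whose vertices $x_v$ are the non-centers, together with a new vertex $d_v$ for each $v$ and the single edge $d_vx_v$. Set $D=\{d_v\}$ and $m(x_v)=d_v$, so the blocks are $B_{d_v}=\{d_v,x_v\}$. The graph $G$ is planar: put each pendant vertex inside a face next to its hub. Each $d_v$ is a leaf whose only neighbor is $x_v$, so any dominating set must contain a vertex of $\{d_v,x_v\}$ for every $v$. These $q$ pairs are disjoint, so $|D|=q$ is minimum.

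\textbf{Step 3: compute $W$ and $A$.} The neighbors of $x_v$ other than $d_v$ are the hubs $x_u$ with $uv\in E(T)$, and they lie in pairwise distinct foreign blocks. Hence $k_{x_v}=\deg_T(v)\ge4$, and $(k_{x_v}-2)^+=\deg_T(v)-2$. Summing over the $q$ hubs gives
\[
 W=\sum_v(\deg_T v-2)=2(3q-6)-2q=4q-12=4|D|-12.
\]
Vertices of degree $4$ have $k_x=4<6$, while degree-$6$ vertices have $k_x=6$. Therefore $A$ is exactly the set of degree-$6$ hubs, and $|A|=q-6=|D|-6$. Letting $q\to\infty$, the ratio $|A|/|D|\to1$, so no coefficient below one can bound $|A|$ in terms of $|D|$ for all graphs, even when $D$ is minimum.

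The main obstacle is only the existence claim in Step 1 for infinitely many $q$. The computations in Steps 2 and 3 are immediate once $T$ is fixed.
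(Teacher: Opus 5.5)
Your proposal is correct and is essentially the paper's proof: a pendant center on every vertex of a triangulation with exactly six vertices of degree $4$ and all others of degree $6$, so that $k_v=\deg_T(v)$, $W=4q-12$, $|A|=q-6$, and minimality follows from the disjoint sets $N[d_v]$. The only difference is the family of triangulations: the paper uses $r$ nested triangles with consistently oriented diagonals ($q=3r$), while your subdivided octahedra ($q=4s^2+2$) work equally well; your fallback with square caps would not work, since triangulating a square cap creates vertices of degree $5$, so the caps must be triangles as in the paper.
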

\begin{proof}
For each integer $r\ge2$, we draw $r$ nested triangles.
We join corresponding vertices of consecutive triangles and add one
diagonal in each region with four sides,
in the same cyclic direction, as in \cref{fig:sharpness-construction}(a).
The inner and outer faces are also triangles, so the resulting graph
$T$ is a plane triangulation with $q=3r$ vertices.
The six vertices on the innermost and outermost triangles have degree
four. Every other vertex has two neighbors on its own triangle and two
on each adjacent triangle, so its degree is six.

For each vertex $v$ of $T$, we add a new leaf $d_v$ adjacent only to $v$.
This gives a planar graph $G$.
We set $D=\{d_v:v\in V(T)\}$ and assign $v$ to $d_v$.
Then $B_{d_v}=\{v,d_v\}$ and $k_v=\deg_T(v)$.
Figure~\ref{fig:sharpness-construction} shows the construction for $r=4$.

\begin{figure}[!ht]
\centering
\begin{tikzpicture}[every node/.style={font=\small},
  edge/.style={draw=black!65,line width=.6pt},
  owner/.style={draw=blue!65!black,line width=1.05pt}]
\begin{scope}[xshift=0cm]
\coordinate (v0) at (0.00000,3.50000);
\coordinate (v1) at (-3.03109,-1.75000);
\coordinate (v2) at (3.03109,-1.75000);
\coordinate (v3) at (0.00000,2.62500);
\coordinate (v4) at (-2.27332,-1.31250);
\coordinate (v5) at (2.27332,-1.31250);
\coordinate (v6) at (0.00000,1.75000);
\coordinate (v7) at (-1.51554,-0.87500);
\coordinate (v8) at (1.51554,-0.87500);
\coordinate (v9) at (0.00000,0.87500);
\coordinate (v10) at (-0.75777,-0.43750);
\coordinate (v11) at (0.75777,-0.43750);
\draw[edge] (v0)--(v1);
\draw[edge] (v0)--(v2);
\draw[edge] (v0)--(v3);
\draw[edge] (v0)--(v4);
\draw[edge] (v1)--(v2);
\draw[edge] (v1)--(v4);
\draw[edge] (v1)--(v5);
\draw[edge] (v2)--(v3);
\draw[edge] (v2)--(v5);
\draw[edge] (v3)--(v4);
\draw[edge] (v3)--(v5);
\draw[edge] (v3)--(v6);
\draw[edge] (v3)--(v7);
\draw[edge] (v4)--(v5);
\draw[edge] (v4)--(v7);
\draw[edge] (v4)--(v8);
\draw[edge] (v5)--(v6);
\draw[edge] (v5)--(v8);
\draw[edge] (v6)--(v7);
\draw[edge] (v6)--(v8);
\draw[edge] (v6)--(v9);
\draw[edge] (v6)--(v10);
\draw[edge] (v7)--(v8);
\draw[edge] (v7)--(v10);
\draw[edge] (v7)--(v11);
\draw[edge] (v8)--(v9);
\draw[edge] (v8)--(v11);
\draw[edge] (v9)--(v10);
\draw[edge] (v9)--(v11);
\draw[edge] (v10)--(v11);
\filldraw[fill=black!75,draw=black!80,line width=.5pt] (v0) circle (2.5pt);
\filldraw[fill=black!75,draw=black!80,line width=.5pt] (v1) circle (2.5pt);
\filldraw[fill=black!75,draw=black!80,line width=.5pt] (v2) circle (2.5pt);
\filldraw[fill=white,draw=black!80,line width=.5pt] (v3) circle (2.5pt);
\filldraw[fill=white,draw=black!80,line width=.5pt] (v4) circle (2.5pt);
\filldraw[fill=white,draw=black!80,line width=.5pt] (v5) circle (2.5pt);
\filldraw[fill=white,draw=black!80,line width=.5pt] (v6) circle (2.5pt);
\filldraw[fill=white,draw=black!80,line width=.5pt] (v7) circle (2.5pt);
\filldraw[fill=white,draw=black!80,line width=.5pt] (v8) circle (2.5pt);
\filldraw[fill=black!75,draw=black!80,line width=.5pt] (v9) circle (2.5pt);
\filldraw[fill=black!75,draw=black!80,line width=.5pt] (v10) circle (2.5pt);
\filldraw[fill=black!75,draw=black!80,line width=.5pt] (v11) circle (2.5pt);
\node at (0,4.55) {(a) Four nested triangles};
\node[font=\footnotesize] at (0,-2.55) {$6$ vertices of degree $4$, $6$ of degree $6$};
\end{scope}
\begin{scope}[xshift=7.1cm]
\coordinate (v0) at (0.00000,3.50000);
\coordinate (v1) at (-3.03109,-1.75000);
\coordinate (v2) at (3.03109,-1.75000);
\coordinate (v3) at (0.00000,2.62500);
\coordinate (v4) at (-2.27332,-1.31250);
\coordinate (v5) at (2.27332,-1.31250);
\coordinate (v6) at (0.00000,1.75000);
\coordinate (v7) at (-1.51554,-0.87500);
\coordinate (v8) at (1.51554,-0.87500);
\coordinate (v9) at (0.00000,0.87500);
\coordinate (v10) at (-0.75777,-0.43750);
\coordinate (v11) at (0.75777,-0.43750);
\coordinate (v12) at (0.44194,3.94194);
\coordinate (v13) at (-3.63479,-1.58824);
\coordinate (v14) at (3.64839,-1.65223);
\coordinate (v15) at (0.35009,2.49061);
\coordinate (v16) at (-2.33198,-0.94212);
\coordinate (v17) at (1.98189,-1.54850);
\coordinate (v18) at (0.35009,1.61561);
\coordinate (v19) at (-1.57421,-0.50462);
\coordinate (v20) at (1.22411,-1.11100);
\coordinate (v21) at (-0.00000,0.25000);
\coordinate (v22) at (-0.25658,-0.18213);
\coordinate (v23) at (0.25658,-0.18213);
\draw[draw=orange!25,line width=15pt,line cap=round] (v0)--(v12);
\draw[edge] (v0)--(v1);
\draw[edge] (v0)--(v2);
\draw[edge] (v0)--(v3);
\draw[edge] (v0)--(v4);
\draw[edge] (v1)--(v2);
\draw[edge] (v1)--(v4);
\draw[edge] (v1)--(v5);
\draw[edge] (v2)--(v3);
\draw[edge] (v2)--(v5);
\draw[edge] (v3)--(v4);
\draw[edge] (v3)--(v5);
\draw[edge] (v3)--(v6);
\draw[edge] (v3)--(v7);
\draw[edge] (v4)--(v5);
\draw[edge] (v4)--(v7);
\draw[edge] (v4)--(v8);
\draw[edge] (v5)--(v6);
\draw[edge] (v5)--(v8);
\draw[edge] (v6)--(v7);
\draw[edge] (v6)--(v8);
\draw[edge] (v6)--(v9);
\draw[edge] (v6)--(v10);
\draw[edge] (v7)--(v8);
\draw[edge] (v7)--(v10);
\draw[edge] (v7)--(v11);
\draw[edge] (v8)--(v9);
\draw[edge] (v8)--(v11);
\draw[edge] (v9)--(v10);
\draw[edge] (v9)--(v11);
\draw[edge] (v10)--(v11);
\draw[owner] (v0)--(v12);
\draw[owner] (v1)--(v13);
\draw[owner] (v2)--(v14);
\draw[owner] (v3)--(v15);
\draw[owner] (v4)--(v16);
\draw[owner] (v5)--(v17);
\draw[owner] (v6)--(v18);
\draw[owner] (v7)--(v19);
\draw[owner] (v8)--(v20);
\draw[owner] (v9)--(v21);
\draw[owner] (v10)--(v22);
\draw[owner] (v11)--(v23);
\filldraw[fill=white,draw=black!80,line width=.5pt] (v0) circle (2.5pt);
\filldraw[fill=white,draw=black!80,line width=.5pt] (v1) circle (2.5pt);
\filldraw[fill=white,draw=black!80,line width=.5pt] (v2) circle (2.5pt);
\filldraw[fill=white,draw=black!80,line width=.5pt] (v3) circle (2.5pt);
\filldraw[fill=white,draw=black!80,line width=.5pt] (v4) circle (2.5pt);
\filldraw[fill=white,draw=black!80,line width=.5pt] (v5) circle (2.5pt);
\filldraw[fill=white,draw=black!80,line width=.5pt] (v6) circle (2.5pt);
\filldraw[fill=white,draw=black!80,line width=.5pt] (v7) circle (2.5pt);
\filldraw[fill=white,draw=black!80,line width=.5pt] (v8) circle (2.5pt);
\filldraw[fill=white,draw=black!80,line width=.5pt] (v9) circle (2.5pt);
\filldraw[fill=white,draw=black!80,line width=.5pt] (v10) circle (2.5pt);
\filldraw[fill=white,draw=black!80,line width=.5pt] (v11) circle (2.5pt);
\filldraw[fill=orange!55,draw=black!80,line width=.5pt] (v12) circle (2.5pt);
\filldraw[fill=orange!55,draw=black!80,line width=.5pt] (v13) circle (2.5pt);
\filldraw[fill=orange!55,draw=black!80,line width=.5pt] (v14) circle (2.5pt);
\filldraw[fill=orange!55,draw=black!80,line width=.5pt] (v15) circle (2.5pt);
\filldraw[fill=orange!55,draw=black!80,line width=.5pt] (v16) circle (2.5pt);
\filldraw[fill=orange!55,draw=black!80,line width=.5pt] (v17) circle (2.5pt);
\filldraw[fill=orange!55,draw=black!80,line width=.5pt] (v18) circle (2.5pt);
\filldraw[fill=orange!55,draw=black!80,line width=.5pt] (v19) circle (2.5pt);
\filldraw[fill=orange!55,draw=black!80,line width=.5pt] (v20) circle (2.5pt);
\filldraw[fill=orange!55,draw=black!80,line width=.5pt] (v21) circle (2.5pt);
\filldraw[fill=orange!55,draw=black!80,line width=.5pt] (v22) circle (2.5pt);
\filldraw[fill=orange!55,draw=black!80,line width=.5pt] (v23) circle (2.5pt);
\node[right=3pt] at (v0) {$v$};
\node[right=3pt] at (v12) {$d_v$};
\node at (0,4.55) {(b) Add one leaf at every vertex};
\node[font=\footnotesize] at (0,-2.55) {$B_{d_v}=\{v,d_v\}$};
\end{scope}
\end{tikzpicture}
\caption{The construction with four nested triangles.
(a) The triangulation $T$. Black vertices have degree four and white
vertices have degree six.
(b) The graph $G$ after adding one leaf at each vertex. Orange leaves
form $D$, and the shaded pair is one owner block $B_{d_v}=\{v,d_v\}$.}
\label{fig:sharpness-construction}
\end{figure}

The set $A$ consists of the $q-6$ vertices of degree six in $T$.
Each has weight four, and the other six non-centers have weight two.
Thus
\[
 |A|=q-6,\qquad W=4(q-6)+2\cdot6=4q-12.
\]
The closed neighborhoods $N[d_v]=\{v,d_v\}$ are pairwise disjoint.
Every dominating set must contain a vertex from each of these sets
to dominate all leaves. It therefore has at least $q$ vertices.
Our set $D$ dominates $G$ and has size $q$, so it is minimum.

These examples also explain the coefficient in the bound used in
\cref{sec:algorithm}. For a vertex $v$ of degree six in $T$, the set
$N_G(v)$ contains $d_v$ and the six neighbors of $v$ in $T$.
Among centers in $D$, only $d_v$ dominates $d_v$, and only $d_w$
dominates each neighbor $w$ of $v$ in $T$.
Thus covering $N_G(v)$ with centers from $D$ requires seven centers.
This holds for all $q-6$ vertices in $A$, and $(q-6)/q$ tends to one
as $r$ grows. Hence the coefficient one cannot be reduced.
Here the covering set must consist of centers in $D$, not arbitrary
vertices of the graph.
\end{proof}

These examples show that the coefficients in the bounds cannot be
reduced. They do not show that the final approximation factor eight is
the best possible for an algorithm.

\end{document}